\theoremstyle{plain}
\newtheorem{theorem}{Theorem}[section]
\newtheorem{lemma}[theorem]{Lemma}
\definecolor{purple}{rgb}{0.5,0,0.5}
\begin{document}

\global\long\global\long\def\ket#1{|#1\rangle}
\global\long\global\long\def\bra#1{\langle#1|}
\global\long\global\long\def\proj#1{|#1\rangle\langle#1|}
\global\long\global\long\def\ketbra#1#2{|#1\rangle\langle#2|}
\global\long\global\long\def\braket#1#2{\langle#1|#2\rangle}

\title{Practical variational tomography for critical 1D systems}
\author{Jong Yeon Lee}
\email{jlee2@caltech.edu}
\affiliation{Institute for Quantum Information and Matter and Walter Burke Institute for Theoretical Physics, California Institute of Technology, Pasadena, California 91125, USA}
\author{Olivier Landon-Cardinal}
\email{olc@caltech.edu}
\affiliation{Institute for Quantum Information and Matter and Walter Burke Institute for Theoretical Physics, California Institute of Technology, Pasadena, California 91125, USA}

\begin{abstract}
We improve upon a recently introduced efficient quantum state 
reconstruction procedure targeted to states well-approximated by the 
multi-scale entanglement renormalization ansatz (MERA), e.g., ground states of critical models. 
We show how to numerically select a subset of experimentally accessible measurements which maximize information extraction about renormalized particles, thus dramatically reducing the required number of physical measurements. 
We numerically estimate the number of measurements required to characterize the ground state of the critical 1D Ising (resp. XX) model and find that MERA tomography on 16-qubit (resp. 24-qubit) systems requires the same experimental effort than brute-force tomography on 8 qubits. We derive a bound computable from experimental data which certifies the distance between the experimental and reconstructed states.
\end{abstract}

\pacs{03.65.Wj, 03.67.Ac, 03.67.Mn}
\maketitle


The understanding of many-body quantum systems has dramatically progressed recently, theoretically and experimentally. 
New efficient numerical methods use the properties of entanglement in many-body states, such as the area law of entanglement entropy~\cite{Hastings07}, 
to describe efficiently the many-body wave function of physical systems~\cite{Orus14}.
In parallel, experimentalists achieve a very high degree of control over larger and larger systems~\cite{HHR+05,MSB+11}.
However, efficient methods to quantitatively compare theoretical predictions to experimental realizations are few. 

Quantum state tomography~\cite{Hradil97} is a paradigm that aims to reconstruct the quantum state of a system
by performing multiple measurements on identically prepared copies of the system.
Since measurements perturb a quantum system, many copies of the system are needed  
to extract information about the many-body wave function.
Once the experimental data is extracted, a numerical procedure determines which density matrix fits best the measurements. This quantum state reconstruction can be performed using different approaches, the most used being maximum likelihood estimation~\cite{AJK04}.
   
Generally, both the number of measurements and the post-processing time of quantum state reconstruction
grows exponentially with the system size. This is not surprising since 
the dimension of the Hilbert space of $n$-particles grows exponentially in $n$. Note that we generically refer to the fundamental experimental objects of the physical system of interest as particles. For instance, for cold atoms in an optical potential, ``particles'' would correspond to cold atoms.
In an arbitrary many-body wavefunction, there is an exponential number of coefficients to estimate. 
Furthermore, for a (Haar)-random quantum state, most coefficients have exponentially small amplitudes in a local basis, so to distinguish any one of those amplitudes from zero, one must take an exponential number of samples. 
This simple reasoning hints towards an experimental and numerical efforts that scales exponentially with system size.

However, \emph{physical} quantum states, for instance ground states of local Hamiltonians, only constitute a very small subset of all states in the Hilbert space~\cite{PQS+11}. A general and very fruitful idea is to approximate those states of interest by a suitable variational family of states. An efficient family of states not only allows for a concise description of states --the number of parameters needed to represent them grows only polynomially with system size-- but also allows to efficiently compute physical quantities, such as expectation values of local observables in polynomial time. 

Tensor network (TN) states are variational families of states which are strong candidates to parametrize the physical part of the Hilbert space~\cite{Orus14}. 
TN states are built to accommodate the structure of entanglement for various physical states.
For instance, the matrix product states (MPS) representation is based on the property that the entropy of a block of particles grows with the boundary of the block. This property is called an area law, see~\cite{ECP10} for a review. Ground states of 1D gapped systems follow an area law~\cite{Hastings07} and are well-approximated~\cite{VC06} with matrix product states~\cite{AKL+87,FNW92,KSZ93,KSZ91} (MPS). Moreover, convenient numerical methods exist to find such a MPS approximation, such as the density matrix renormalization group (DMRG) method~\cite{White92,Schollwock11}. 
However, this \emph{area law} is violated by critical systems, i.e. ground states of quantum systems near a quantum phase transition~\cite{Sachdev01}. Indeed, in 1D critical systems, the entanglement of a block of $n$ particles diverges as $\log(n)$. To reproduce this entanglement scaling, Multi-Scale Entanglement Renormalization Ansatz (MERA) was introduced in~\cite{Vidal08}. A MERA state is the output of a specific type of quantum circuit whose gates arrangement generates an amount of entanglement. which grows logarithmically with block size. 

Recently, the use of variational states has been applied to tomography~\cite{BGC+13,CPF+10,LP12} and explicit state reconstruction methods have been given. The pioneering work on MPS tomography~\cite{CPF+10} provided the first demonstration that variational tomography could be performed efficiently. Subsequent work~\cite{LP12} demonstrated that variational tomography was also possible for 1D critical systems described by MERA states. MERA tomography offers the perspective to be an extremely valuable tool in the experimental characterization of quantum simulators, finely controlled systems which experimentalists can tune to reproduce the dynamics of a model Hamiltonian. Indeed, the MERA can be straightforwardly extended to study critical 2D models which are precisely the Hamiltonians that quantum simulators~\cite{BN09,CZ12} offer to probe experimentally.

In this paper, we revisit the idea of MERA tomography in 1D and explicitly investigate some of the challenges left open in the original proposal~\cite{LP12}. The original article gave a proof of principle that the tomography protocol only required a numerical and experimental effort scaling polynomially with the size of the system. Schematically, the idea is that the MERA transforms a highly-entangled state into a trivial product state by applying a logarithmic number of renormalization steps, each corresponding to a layer of gates in a quantum circuit. To identify each gate, it suffices to identify the density matrix on a block of renormalized particles of constant size. However, inferring information about renormalized particles is done through renormalized observables. These renormalized observables are accessed using measurements on the physical states and the knowledge of the previous renormalization steps. To maintain accuracy about the estimation value of a renormalized observable, the number of repeated measurements at the physical level is multiplied by a constant, the \emph{scaling factor}, for each renormalization step. Since there are only a logarithmic number of renormalization steps, the overhead in the number of measurements grows only polynomially with system size.

While this result is crucial theoretically, it does not guarantee that the number of measurement and the processing time is reasonable \emph{in practice} for moderately small systems within experimental reach. Indeed, the polynomial growth governs the asymptotic scaling in the limit of very large system size, but experimentalists are interested in the actual number of measurements required to characterize a system of interest. Thus, identifying the precise polynomial and in particular the power of the leading term, along with its constant multiplicative coefficient, is of paramount importance.  

The analysis of~\cite{LP12} focused only on the scaling factor $\lambda$ of single particle observables and found that the overhead in repeated physical measurements scaled as $\lambda \sim 6$ for the critical Ising model. This naively lead to an estimate of the total number of measurements needed that increases slowly with system size. However, this analysis failed to take into account that one needs to measure many-body observables for MERA tomography. For a ternary MERA on qubits, one needs to measure 5-body observables whose scaling factor is $\lambda^5$, resulting in an overhead on repeated measurements which is beyond experimental capacities, even for moderately large systems.

Here, we i) assess the reasons why the overhead in the number of measurements is much larger than naively anticipated, ii) suggest strategies to minimize it and iii) numerically demonstrate that those strategies lead to a reasonable total number of measurements for the critical Ising model on system size of experimental interest.  


The article is organized as follows. 
In section~\ref{sec:MERAtomo-recap}, we recall the idea of variational tomography focusing on MERA tomography. We discuss the concepts in the main body of the article. The technical discussion about our improved numerical algorithm for MERA tomography is available in the Appendix~\ref{sec:Numerical} for the interested reader.
In section~\ref{sec:ScalingMeasurements}, we investigate the scaling of the total number of experimental measurements needed to characterize an experimental state close to the groundtstate of a critical Ising 1D chain. We show that the naive approach of~\cite{LP12} requires an unreasonable (yet polynomial) amount of measurements. In section~\ref{sec:Improvement}, we suggest two possible solutions to resolve the issue. In section~\ref{numerical_obs}, we numerically show that the combinations of those two solutions significantly reduces the number of experimental measurements. In section~\ref{sec:propagation-errors}, we provide an analysis of the source of errors in our tomography scheme and infer the bound of the distance between experimental and reconstructed states, based on a more detailed analysis provided in the Appendix~\ref{sec:ErrorAnalysis}.
  

\section{MERA tomography}
\label{sec:MERAtomo-recap}

\subsection{Variational tomography}

The core idea of variational tomography is to take advantage of the succinct description of variational states in order to devise an efficient learning method. A learning method consists of three parts: i) the measurement prescription which identifies the measurements to perform, ii) the data acquisition when the measurements are performed and iii) the state reconstruction that infers the compatible quantum state via post-processing. Note that the measurement prescription can change adaptatively due to data acquisition as preliminary data can improve the choice of measurements. This is the case for MERA tomography. 

As mentioned in the introduction, the idea of variational tomography has been demonstrated on two variational class of states: MPS and MERA. In both cases, quantum state tomography is performed on small systems and numerical processing is used to stitch the density matrices of those small systems into a global state. While this stitching is efficient for both MPS and MERA, this procedure is expected to be very hard for arbitrary state in the Hilbert space. Recent progress has been made to understand the structure of quantum states for which local measurements are informationally-complete~\cite{Kim14}.

In MPS tomography~\cite{CPF+10}, reduced density matrices $\sigma_i$ on all blocks of a constant number (independent of system size) of particles are estimated. Then, a classical algorithm, inspired from ideas of compressed sensing, is used to reconstruct the global state. Alternatively, one could learn the quantum circuit preparing the state. Indeed, any MPS can be prepared using a staircase circuit with linear depth (see Fig.~\ref{fig:MPS}). 
\begin{figure}
 \includegraphics[width=0.9\columnwidth]{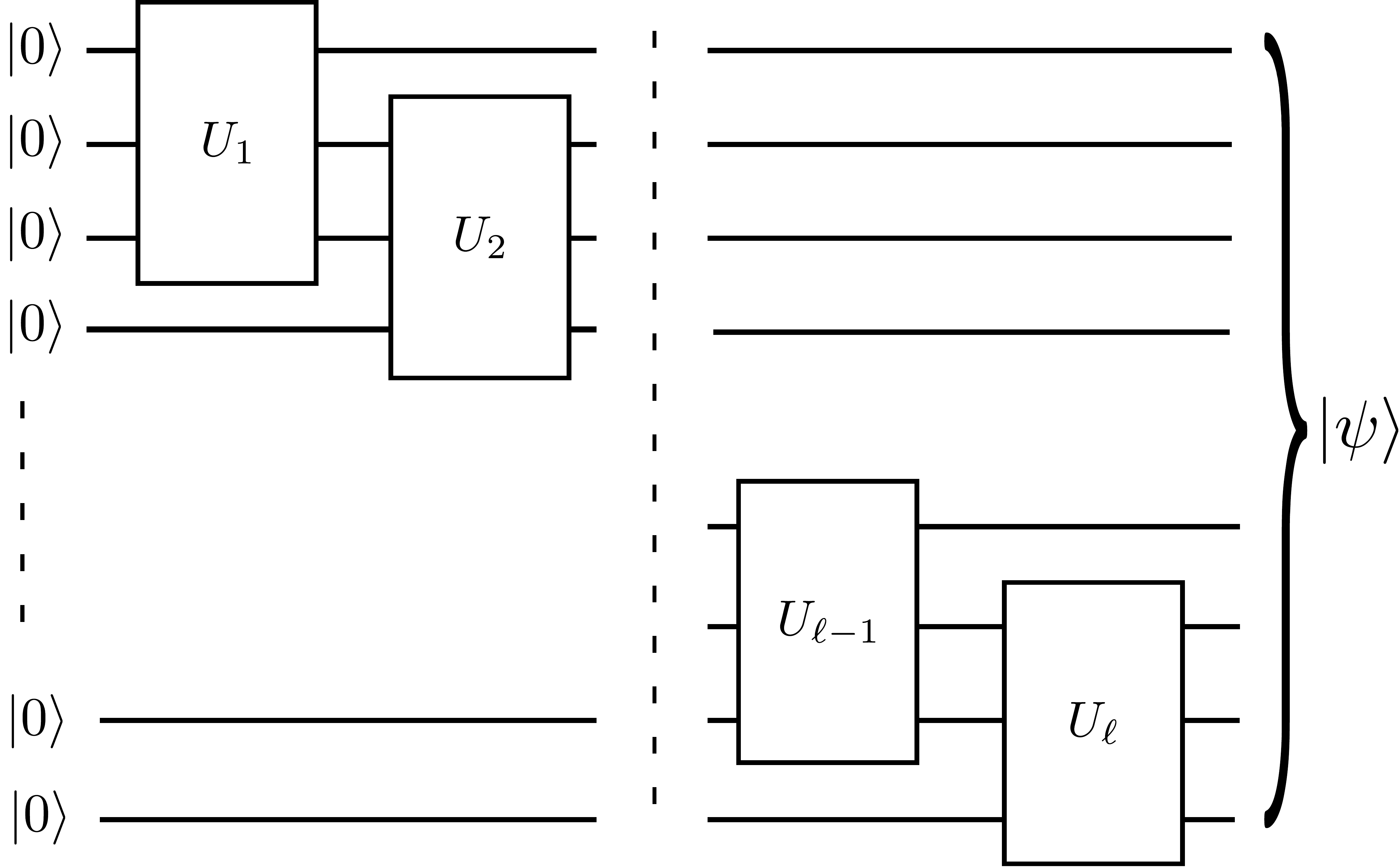}
	\caption{\label{fig:MPS} A matrix product state is obtained from a product state by applying a staircase circuit. This structure allows to sequentially infer the quantum gates.}
\end{figure}

We now describe MPS tomography in more details as it bears many similarities with MERA tomography, which will be discussed in the next section. 
One can learn the rightmost quantum gate $U_\ell$ of Fig.~\ref{fig:MPS} by performing tomography on a small number of particles and then identifying a unitary gate which disentangles the bottom particle and puts it in the state $\ket{0}$. 
We then repeat this procedure on state $U_\ell^\dagger \ket{\psi}$.
To learn $U_{\ell-1}$, the original proposal of~\cite{CPF+10} was to experimentally apply the gate $U_\ell^\dagger$. However, one can use the knowledge of $U_\ell$ to see how it modifies the physical observables on the physical state $\ket{\psi}$. In other words, the knowledge of $U_\ell$ allows us to translate measurements on the physical state $\ket{\psi}$ into what would be obtained by performing the measurement of some \emph{renormalized} observables on the renormalized state $U_\ell^\dagger \ket{\psi}$. Each physical observable will be associated to a renormalized observable. As long as renormalized observables span the support of the density matrix, they are informationally complete. The power of renormalized observables was not immediately realized in~\cite{CPF+10}, but became apparent when MERA tomography was devised~\cite{LP12}. We now describe MERA tomography in great details as our work builds upon it.

\subsection{Learning MERA states}

\subsubsection{Quantum circuit for MERA states}
The MERA is a variational family of states~\cite{Vidal08} arising from a real-space renormalization group approach called entanglement renormalization~\cite{Vidal07}.
Entanglement renormalization creates a sequence of quantum states $\{\rho_\tau\}_{\tau=0\dots T}$ where $\rho_0$ is the physical state (which we will also refer to as the experimental state in the context of tomography) and $\rho_{\tau >0}$ are coarse-grained version of the physical state which encode entanglement on a larger scale. Intuitively, one can think of each renormalized state $\rho_\tau$ as a state of a 1D chain of $n/(k^\tau)$ spins where $k=2$ for binary MERA and $k=3$ for ternary MERA. The crucial insight of MERA is that for critical states, it is important to get rid of short scale entanglement before each renormalization step. Otherwise, the short scale entanglement accumulates and the renormalization cannot be carried anymore. This renormalization approach translates into a quantum circuit, depicted on Fig.~\ref{fig:MERA} that turns the physical state $\rho_0$ on $n$ particles into the all zero state $\ket{0}^{\otimes{n}}$ (in the case of a pure state).   

\begin{figure}
	\includegraphics[width=0.9\columnwidth]{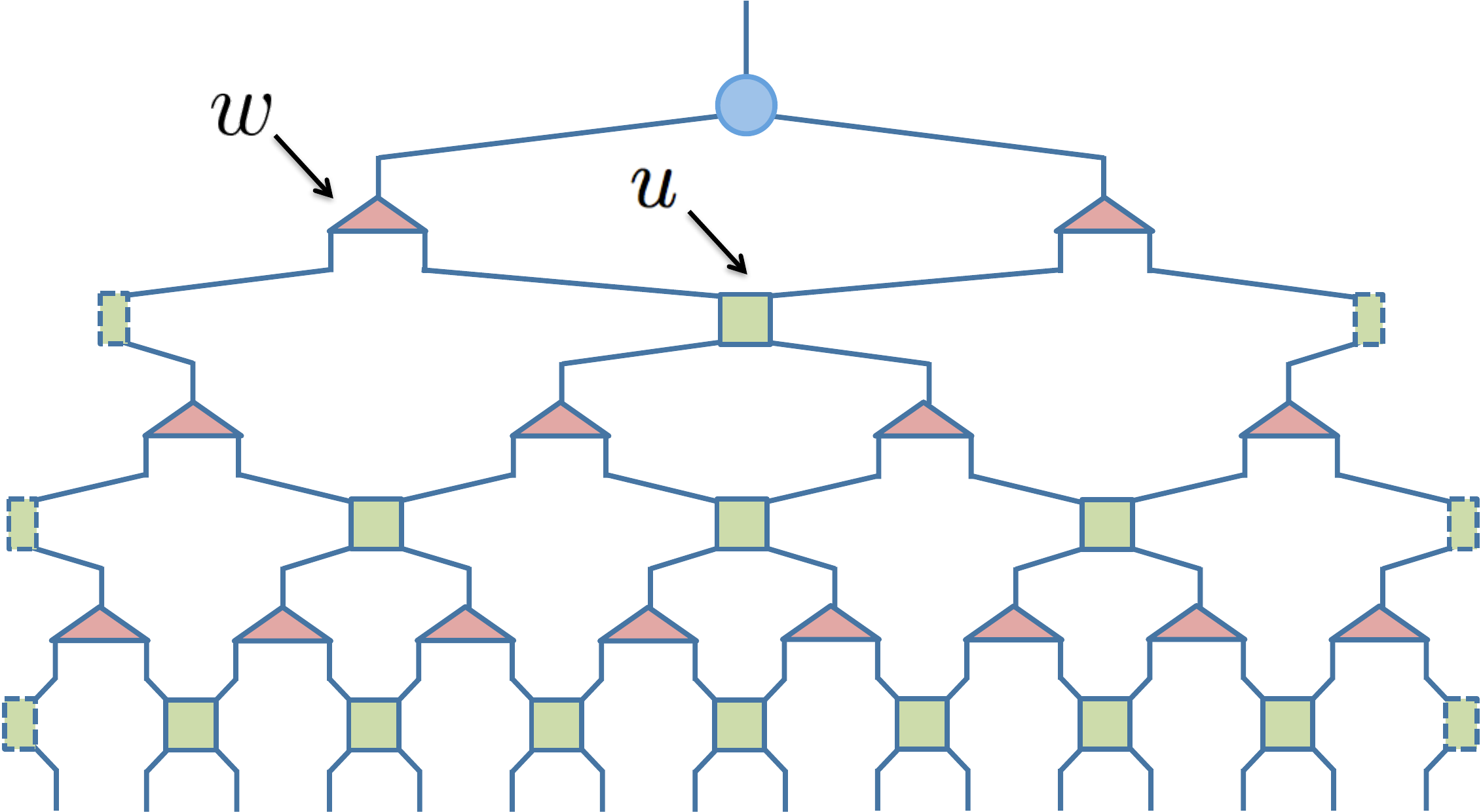}\\
	\vspace{0.1in}
	\includegraphics[width=0.9\columnwidth]{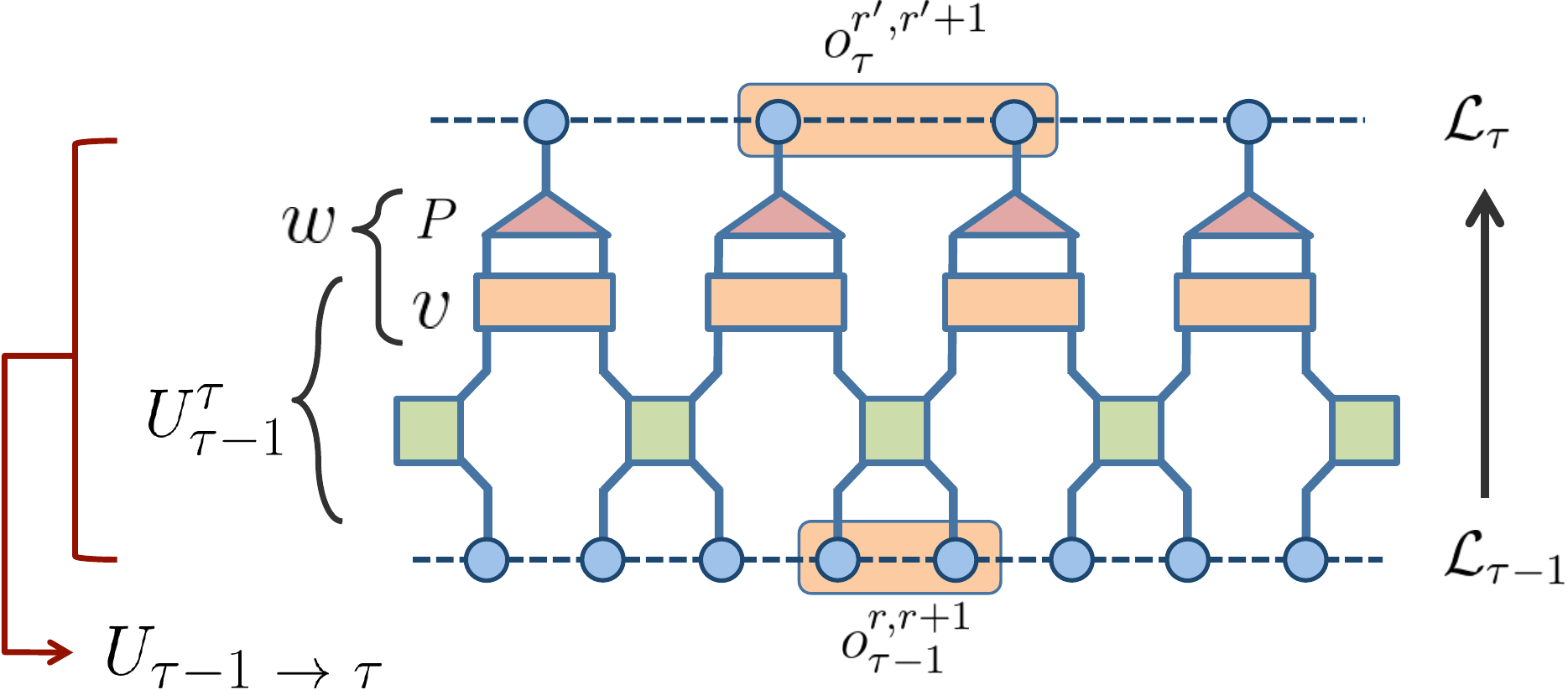}
	\caption{\label{fig:MERA} (Color online) {\bf (up)} Example of binary 1D MERA for a lattice of 16 particles. (periodic boundary) \textit{u} is a disentangler and \textit{w} is an isometry. {\bf (down)} An isometry \textit{w} can be decomposed into unitary \textit{v} followed by a projector \textit{P}.
	$U_{\tau-1 \to \tau}$ is the isometry of layer $\tau$ which coarse-grains a state at level $\tau-1$ to a state at level $\tau$ whereas $U_{\tau-1}^\tau$ is a the unitary part of $U_{\tau-1 \to \tau}$. $o_{\tau-1}$ is an operator at level $\tau-1$ and it is mapped into $o_{\tau}$ at level $\tau$. }
\end{figure}

This MERA circuit consists of two sets of quantum gates. 
The \emph{disentanglers} are unitary transformations, depicted by squares on Fig.~\ref{fig:MERA} and denoted $u$, whose goal is to remove short scale entanglement. 
The \emph{isometries}, depicted by triangles on Fig.~\ref{fig:MERA} and denoted $w$, map several particles into a single renormalized particle by applying a unitary transformation $v$ followed by projection operator $P$, see bottom of Fig.~\ref{fig:MERA}. For instance, in binary MERA, two particles whose individual quantum dimension is $\chi$, i.e., whose total quantum dimension is $\chi^2$, are mapped into a single particle of quantum dimension $\chi$. 
Note that this transformation is only possible if the density matrix before the isometry is (approximately) supported on a space of dimension $\chi$ rather than having full rank $\chi^2$. In other words, the purpose of the disentanglers is precisely to locally rotate the Hilbert space to concentrate the support of the density matrices. This remark is at the heart of the numerical method to identify disentanglers.

Another important notion of a MERA circuit is the past causal cone of a quantum gate and the future causal cone of particles. Imagine that time flows from the bottom of Fig.~\ref{fig:MERA} to the top. In other words, the \emph{level} index $\tau$ plays the role of time. Level-0 corresponds to the physical state while $\tau>0$ indices the states and lattices obtained after $\tau$ step of renormalization. The transformation $U_\tau$ from $\rho_{\tau-1}$ to $\rho_{\tau}$ corresponds to a \emph{layer} of quantum gates, see Fig.~\ref{fig:MERA}. For any given quantum gate of the circuit (disentangler or isometry), its past causal cone is the set of physical particles whose change would induce a change of the quantum gate. For any set of physical particles, its future causal cone is the set of quantum gates such that a particle belongs to the past causal cone of at least one of the gates in the set.  



\subsubsection{MERA tomography procedure}

Let us briefly describe the MERA tomography procedure, taking the binary MERA geometry (see Fig.~\ref{fig:MERA}) as example. The goal is to find a MERA circuit representing a given experimental state. To do this, MERA tomography repeatedly measures local observables to obtain the reduced density matrices of 4 renormalized particles in lattice $\mathcal{L}_{\tau}$ which are the past causal cone of each isometry mapping $\mathcal{L}_\tau$ to $\mathcal{L}_{\tau+1}$ (see Fig.~\ref{fig:optimization_geometry}). We will often refer to renormalized particles as sites on the renormalized lattices. Hence, a density matrix on 4 renormalized particles will be referred to as a 4-site density matrix. 

For the physical level, i.e., $\tau=0$, those reduced density matrices are obtained by brute-force quantum state tomography. This is efficient since brute-force tomography is only performed on a block of constant size. For higher layers, the density matrices can be inferred from physical measurements and the knowledge of the quantum gates in the previous layers. We will describe the procedure in more details in Sec.~\ref{sec:ScalingMeasurements}. For the moment, let us assume that we know every 4-site density matrix $\rho_i$ corresponding to the past causal cones of every isometry $i$ in layer $\tau+1$ .
\begin{figure}
 \includegraphics[width=0.8\columnwidth]{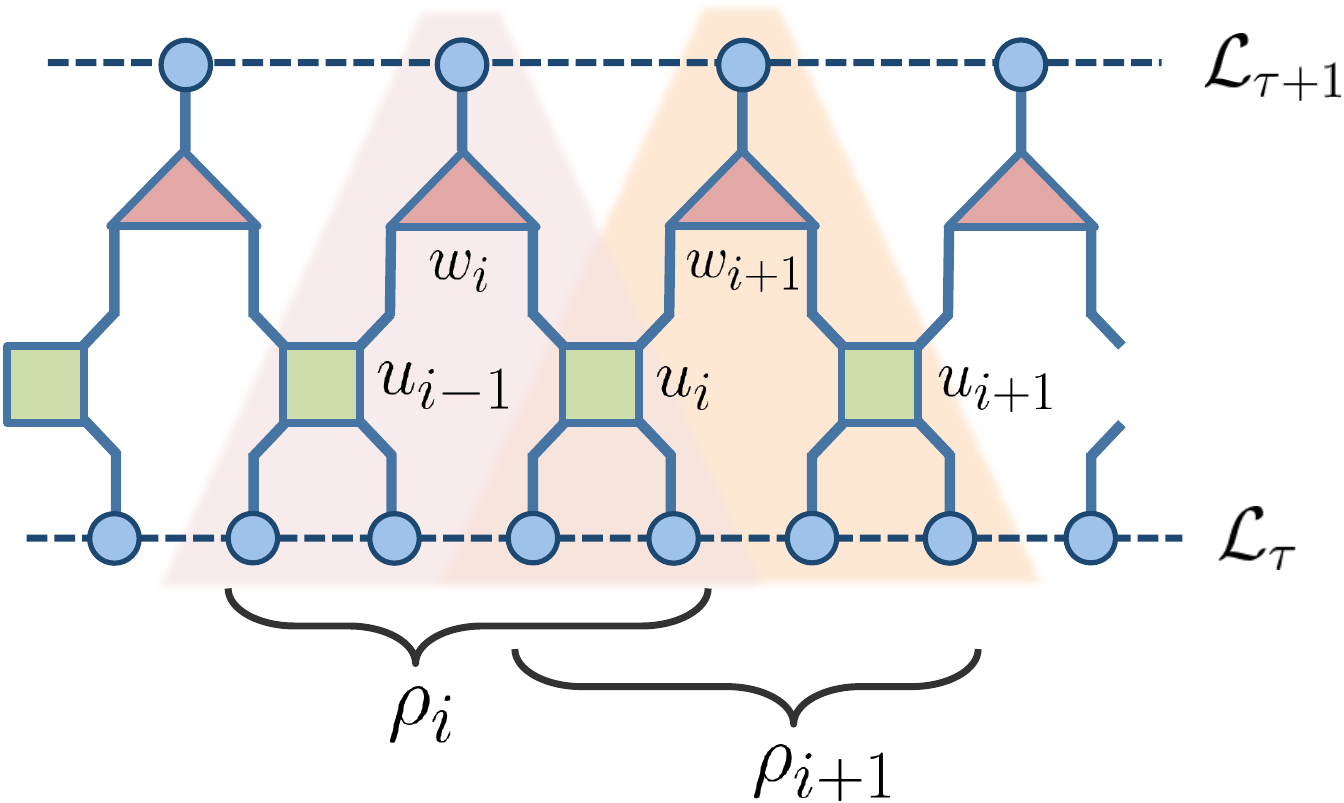}
 \caption{\label{fig:optimization_geometry} (Color online) Each past causal cone of the isometry $w_{i}$ is the 4 sites in state $\rho_i$. The choice of the disentangler $u_i$ affects both isometries $w_{i}$ to its left and $w_{i+1}$ to its right. }
\end{figure}

Given $\left\lbrace\rho_i^\tau\right\rbrace_i$, the goal is to find the disentanglers in layer $\tau+1$. Let us focus on a single disentangler $u=u_i$, supposing that all other disentanglers in the layer are fixed. The choice of $u$ will affect the isometries, $w_{i}$ and $w_{i+1}$, respectively to the left and the right of $u$, see Fig.~\ref{fig:optimization_geometry}.

Thus, the objective function $g$ splits into two parts 
\begin{equation}
 g(u,\rho_i^\tau,\rho_{i+1}^\tau)=f_L(u,\rho_{i}^\tau)+f_R(u,\rho_{i+1}^\tau)
\end{equation}
where $f_L(u,\rho_{i})$ corresponds to minimizing the rank of the 2-site reduced density matrix that is the input of $w_{i}$ and similarly for $f_R$ with respect to $w_{i+1}$. 

After applying the optimal disentanglers, the two-site reduced density matrix at the input of the isometry $w_i$ should have a rank at most $\chi$ so that the isometry keeps the $\chi$ eigenvectors with largest eigenvalues. In other words, we want the probability weight to be supported on the $\chi$ largest eigenvalues. Thus, we maximize the objective function,
\begin{equation} 
f_{L,R}(u,\rho_{i}^\tau) = \sum_{k \leq \chi} \lambda_{k} \label{object1}
\end{equation}
where $\rho_{i}^\tau$ is the reduced density matrix for the $i$-th block at level $\tau$ and $\lambda_{k}$ is $k$-th eigenvalue of the reduced density matrix after the disentangler $u$ has been applied. 

Once all disentanglers have been obtained, the isometries $w_i$ are obtained by diagonalizing the reduced density matrices $\sigma_i=\text{tr}_{14}\left[\rho_i\right]$ at the input of the isometries where $\text{tr}_{14}$ implies tracing over site $i_1$ and $i_4$ after disentanglers (see Fig.~\ref{fig:TNC}). Indeed, one can decompose the isometry $w_i$ as an unitary transformation $v_i$ followed by a projector $P$ of rank $\chi$. Given the diagonalization
\begin{equation}
 \sigma_i=\sum_{k\leq \chi}\lambda_k \proj{\phi_k} + \sum_{k> \chi}\lambda_k \proj{\phi_k}
\end{equation}
the unitary $v_i$ maps the first $k$ eigenvectors $\ket{\phi_k}$ to $\ket{k}\otimes\ket{0}$. The way it acts on the other eigenvectors is arbitrary, as long as $v_i$ is unitary. Afterwards, the projector $P=\mathbb{I}_\chi\otimes\proj{0}$ throws away the irrelevant eigenvectors.
This procedure is repeated over each layer of the MERA circuit. 

In the original MERA tomography procedure described in~\cite{LP12}, a conjugate gradient method was used to maximize the objective function given by Eq.~\eqref{object1}. In this paper, an alternative approach inspired by~\cite{EV09} was used for this maximization. This numerical procedure is discussed in details in Appendix~\ref{sec:Numerical}


\section{Scaling of the number of experimental measurements}
\label{sec:ScalingMeasurements}

\subsection{Ascending superoperator}

As explained in Sec.~\ref{sec:MERAtomo-recap}, MERA tomography infers the quantum circuit preparing the experimental state from a product state. 
To identify each gate, the numerical procedure takes as input the reduced density matrix on a small block of particles. For the physical layer, denoted $\mathcal{L}_0$ on Fig.~\ref{fig:MERA}, those particles correspond to experimentally measurable particles. However, this is not the case, for higher renormalized levels, $\mathcal{L}_\tau$ for $\tau>0$. 
To get access to the density matrices on block of \emph{renormalized} particles, we will assess how physical measurements will be mapped into effective measurements at higher levels. 
This mapping depends on the disentanglers and isometries between the physical level $\mathcal{L}_0$ and the current level $\mathcal{\tau}$.  Thus, it depends on the information acquired by tomography on the previous layers.

\subsubsection{First layer of renormalization}

Let us consider the first layer of renormalization. Let's define $U_0^1$ as the product of all the disentanglers $u$ and all unitary transformations $v$ (see Fig.~\ref{fig:MERA}). Note that $U_0^1$ is a unitary transformation since it does not contain $P$, the projection part of isometries, which reduces the dimension of the Hilbert space. Thus, before truncation, the observable $O_0$ at the physical level is mapped to the semi-renormalized observables $U_0^1 O_0 (U_0^1)^\dagger $ since 
\begin{equation}
\text{tr}(\rho_0 O_0) = \text{tr}( U_{0}^{1} \rho_0  ( U_{0}^{1})^\dagger  U_{0}^{1} O_0 ( U_{0}^{1})^\dagger) \\
\end{equation}
where $\rho_0$ is a density matrix at the physical level. 

However, the crucial step of the renormalization scheme is to reduce the dimension of the Hilbert space. Formally, the idea is that $\tilde{\rho}_1=U_{0}^{1} \rho_0  (U_{0}^{1})^\dagger$ is not full rank but has the form $\tilde{\rho}_1=\rho_1 \otimes |00\dots 0\rangle \langle00\dots 0|$. Thus, one can keep only the relevant degrees of freedom by applying a projector $P$ which removes the superfluous degrees of freedom (see Fig.~\ref{fig:MERA}), i.e., 
\begin{equation}
 P \tilde{\rho}_1 P^\dagger = \rho_1 
\end{equation}

Hence, the expectation value of the physical operator $O_0$ can be written as
\begin{eqnarray}
\text{tr}(\rho_0 O_0) & = & \text{tr}(P \tilde{\rho}_1 P^\dagger P U_{0}^{1} O_0 (U_{0}^{1})^\dagger P^\dagger) \\
& = & \text{tr}(\rho_1 \mathcal{A}_{0}^{1}\left[O_0\right]) 
\end{eqnarray}
where $\mathcal{A}_{0}^{1}\left[O_0\right]$ is a renormalized observable. The action of an ascending superoperator $\mathcal{A}_{0}^1$ is defined by
\begin{equation}
 \mathcal{A}_{0}^1[...]=P U_{0}^{1} [...] (U_{0}^{1})^\dagger P^\dagger.
\end{equation}

\subsubsection{Multiple layers of renormalization}

The reasoning to go from the physical level to the first renormalized level can be iterated. In that way, one defines an ascending superoperator from level $0$ to level $m$ $\mathcal{A}_{0}^m=\prod_{k=1}^m \mathcal{A}_{k-1}^k$, which maps operators at the physical level $O_0$ to operators acting at level $m$ obeying the equation
\begin{equation} \label{eq:scaleO}
\text{tr}(\rho_0 O_0)= \text{tr}(\rho_m \mathcal{A}_0^{m}\left[O_0\right])  
\end{equation}

Eq.~\eqref{eq:scaleO} allows us to relate the renormalized state $\rho_m$ to the measurements $\text{tr}(\rho_0 O_0)$ once we know the ascended observable $\mathcal{A}^m_0\left[O_0\right]$.

We can express the superoperator $\mathcal{A}^m_0$ as a matrix $M_{ij}$ by choosing bases of observables $\left\{O_0^i\right\}$ at the physical level and $\left\{O_m^j\right\}$ at level $m$. Inferring the physical measurement corresponding to an effective measurement on renormalized particles then reduces to inverting this matrix to get $M^{-1}$, 

\begin{eqnarray}
{\cal A}^m_0(O^j_0) =& \sum_i M_{ij} O^i_m   \\ 
\text{tr}(\rho_m O^i_m) =&\sum_i M^{-1}_{ij} \text{tr}(\rho_0 O^j_0)
\end{eqnarray}

\subsection{Overhead on the number of physical measurements}

The strategy to infer information about the renormalized state at level $m$ is now clear: one performs measurements $O_0$ at the physical level and then use the knowledge of the gates in the circuit to compute the ascending superoperator $\mathcal{A}_0^m$ and thus the renormalized observables $\mathcal{A}_0^m\left[O_0\right]$. 

Let us illustrate this approach for scale-invariant MERA in critical systems. In that case, translation-invariance and scale-invariance guarantee that isometries and disentanglers at all levels and sites are the same, which means that the scaling behavior of operators does not depend on the level considered. Moreover, in ternary 1D MERA, we can use one-site physical operators which are mapped into one-site renormalized operators (see Fig.~\ref{fig:scale}). If $O_0$ has a support in that site, then the tensor network contraction for ${\cal A}(O_0)$ can be simplified as in Fig.~\ref{fig:scale}. This simplifies the tomography procedure. Indeed, to calculate the scaling, we only need information about the isometry $w$. 

\begin{figure}
	\includegraphics[width=0.75\columnwidth]{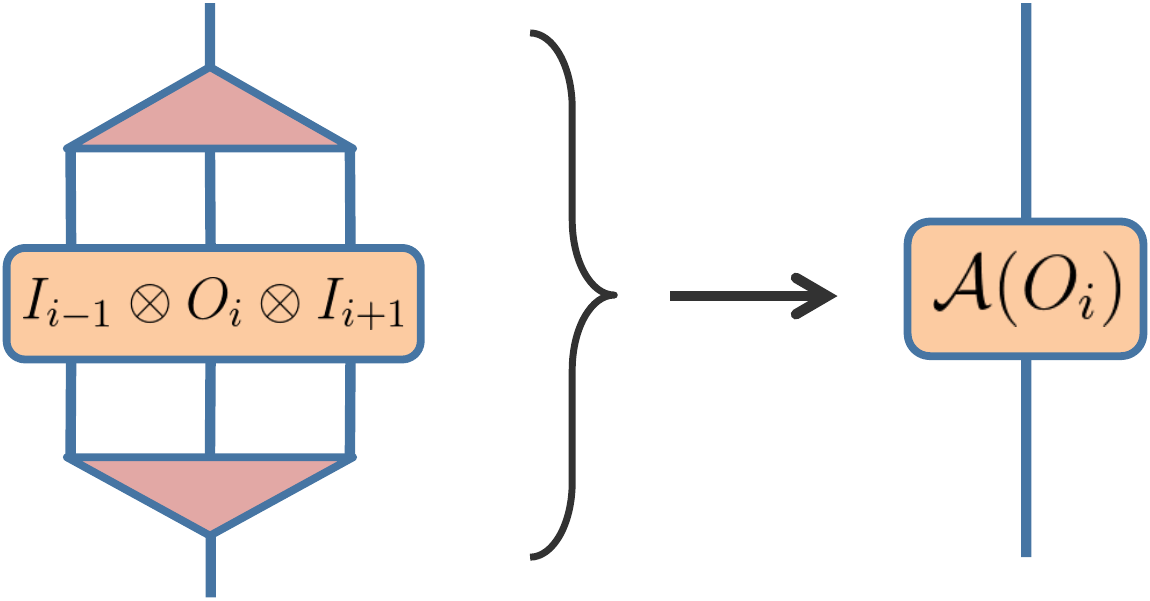}
	\caption{\label{fig:scale} (Color online) Ascending super-operator and renormalized observable for a ternary MERA. The tensor network contraction turns a single-site operator $O_i$ at level $\tau$ into a single-site operator $\mathcal{A}(O_i)$ at level $\tau+1$. }
\end{figure}

We studied a few 1D critical models including Ising, XX, and Potts using a ternary MERA code to study the scaling behaviors of observables. 
Let us focus on the case of the critical Ising model. Choosing the Pauli basis, $\{ O^i \} = \{I,\sigma_x, \sigma_y, \sigma_z\}$ for observables, the matrix representation of the descending superoperator  $ M^{-1}$ reads


\begin{equation} \label{Gmatrix}
\left( M^{-1}_{ij} \right) = \begin{pmatrix}
 1& 1.1 & 0 & 1.7  \\ 
 0&  2.01 & 0& 1.55\\ 
 0&  0& 2.41 &0\\ 
 0&  0.3 & 0&	2.41\\ 
\end{pmatrix} 
\end{equation}

Let's focus on the observable $\sigma^y$ which is an eigenvector of the ascending superoperator since
\begin{equation}
 \mathcal{A}_0^1\left[\sigma_0^y\right]=\frac{1}{\sqrt{\lambda_y}} \sigma_1^y
\end{equation}
where $\sqrt{\lambda_y}=2.41$.

Using Eq.~\ref{eq:scaleO}, one gets that
\begin{equation}
 \text{tr}(\rho_1 \sigma_1^y) = \sqrt{\lambda_y} \text{tr}(\rho_0 \sigma_0^y)
\end{equation}

One crucial point to worry about is the statistical error on the expectation values due to the finite number of measurements. 
Due to statistical fluctuations, the measured expectation value $\langle \sigma_0^y \rangle_{\rho_0}$ will be equal to the proper expectation value $\text{tr}(\rho_0 \sigma_0^y)$ up to some error $\epsilon_0$ which scales like $N_0^{-1/2}$ where $N_0$ is the number of repeated measurements, i.e., 
\begin{equation}
 \langle \sigma_0^y \rangle_{\rho_0} = \text{tr}(\rho_0 \sigma_0^y) \pm \epsilon_0
\end{equation}
When inferring the expectation value $\text{tr}(\rho_1 \sigma_1^y)$, the uncertainty will also be multiplied

\begin{eqnarray}
 \langle  \sigma_1^y \rangle_{\rho_1} & = & \sqrt{\lambda_y} \langle  \sigma_0^y \rangle_{\rho_0} \\
 & = & \sqrt{\lambda_y} \text{tr}(\rho_0 \sigma_0^y) \pm \sqrt{\lambda_y} \epsilon \\
 & = & \text{tr}(\rho_1 \sigma_1^y) \pm \sqrt{\lambda_y} \epsilon
\end{eqnarray}
Thus, to maintain the accuracy $\epsilon_0$ at the renormalized level, one needs to perform a number of measurements

\begin{equation}
 N=\lambda_y N_0
\end{equation}
More generally, if $O^j_1$ is not an eigenvector of the ascending super-operator, the total number of measurements need to be multiplied by $\sum_i | M^{-1}_{ij} |^2 $.

This overhead in the number of measurements (i) will multiply with the number of particles unto which the observables act non-trivially and (ii) will multiply between each layers. From a theoretical point of view, points (i) and (ii) are not catastrophic since they only correspond to a polynomial overhead. Indeed, for point (i), the number of particles in a tomography block is a constant, independent of system size. For point (ii), the overhead depends on system size but is only polynomial. To go from level $0$ to level $m$, the multiplicative factor will be $\lambda_0^m=\prod_{k=1}^m \lambda_{k-1}^k$ but there is only a logarithmic number of layers in the MERA circuit. Thus, from an asymptotic scaling point of view, the method induces only a polynomial overhead. However, for finite size system of interest, this overhead on the number of physical measurements can be dramatic. We will now see that the naive approach outlined here leads to overhead which is unreasonable for experimentalists, before suggesting two improvements that will keep the total number of measurements reasonable.

 \subsection{Prohibitive experimental cost for 1-site observables}
 
 Returning to the example of the Ising model at criticality, we see from Eq.~\eqref{Gmatrix} that maintaining the accuracy at the renormalized level requires $\lambda\approx6$ times the number of measurements than the one at the  physical level. However, this analysis is appropriate only for one-site observable. This fact, which had not been appreciated in~\cite{LP12}, has dramatic consequences. 
 
 Let us now briefly describe a way to perform brute-forve tomography, before returning to MERA tomography.
 In order to estimate the expectation value of observables, a practical method is the so-called $3^n$ method~\cite{AJK04} where one measures observables which are tensor product operators by measuring each individual operators on the same copy of the system and then post-processing classically the information. For instance, suppose we are interested in a chain of qubits and want to estimate the expectation value of $\sigma^z_1 \otimes \sigma^z_2$, an operator which acts non-trivially, but as a tensor product, on qubits 1 and 2. Rather than measuring $\sigma^z_1 \otimes \sigma^z_2$ at once, we can measure $\sigma^z_1$, record the eigenvalue $s_1$ we measured and then measure $\sigma^z_2$ on the same copy and record the eigenvalue $s_2$. That way, we get a sample not only of $\sigma^z_1 \sigma^z_2$, but also some information on $\sigma^z_1 \otimes \mathbb{I}$. This method only requires to perform measurements of $3^n$ operators which are non-trivial on all $n$ particles of the chain, rather than $4^n$. Of course, there is additional information in the partial measurements. The key property here is that observables are tensor product of single-body observables.
 
 In MERA tomography using the ternary geometry, we can use this procedure since renormalized observables can be chosen to be tensor product of renormalized single-body observables. To measure renormalized operators, one only needs to measure the corresponding physical observables on the physical state. However, the number of repeated measurements will be multiplied by $\lambda$, for each single-body observable. Thus, for an observable which is the tensor product of 5 single-body observables, the overhead $S_{block}$ in the number of repeated measurements for block $S_{block}$ will be 
 \begin{equation}
  S_{block}=\lambda^5
 \end{equation}
Thus, to obtain the five-site reduced density matrix on renormalized particles while keeping the same precision as brute-force tomography on physical particles, the number of repeated measurements is multiplied by $\lambda^5$.
 
Furthermore, for a block of renormalized particles at the third layer, the multiplicative overhead is $(\lambda^2)^5$. For $\lambda = 6$ as in our case, this amounts to $6^{10}\simeq 6\times 10^7$. This lead to an unreasonable overhead on the number of measurements for experimentalists. Thus, the approach for MERA tomography needs to be improved in order to be of practical interest for experimentalists.

In the next section, we will suggest two improvements to limit the overhead on the number of measurements. We will then see in Sec.~\ref{sec:ScalingMeasurements} that those improvements dramatically reduce the overhead for the critical Ising model.

\section{Improved approach for MERA tomography}
\label{sec:Improvement}

\subsection{Optimizing the choice of physical observables}

While using tensor product of physical observables which are eigenvectors of the ascending superoperator was appealing from a theoretical point of view, this choice leads to an unreasonable number of measurements for accuracy. Instead, one can vary over the physical observables and select a subset of them which maximizes information extraction. From a tomography point of view, any set of physical observables whose renormalized versions span the space of density matrices on the renormalized block is admissible. We expect that many sets of admissible physical observables exist since the number of physical particles in the past light cone of a renormalized block is much larger than the number of renormalized particles in the block. The problem thus becomes to pick the optimal admissible set.


Of course, one needs to vary over physical observables which are experimentally accessible. In the case of qubits, we restrict ourselves to Pauli observables, i.e., tensor product of Pauli operators. When we map those Pauli observables to renormalized operators, they will become a set of non-orthogonal operators, each of which has different length and direction in the operator space. Among those renormalized operators, we can find a set of operators that give maximum information about the renormalized layer. 

\begin{figure}
	\includegraphics[width=0.95 \columnwidth]{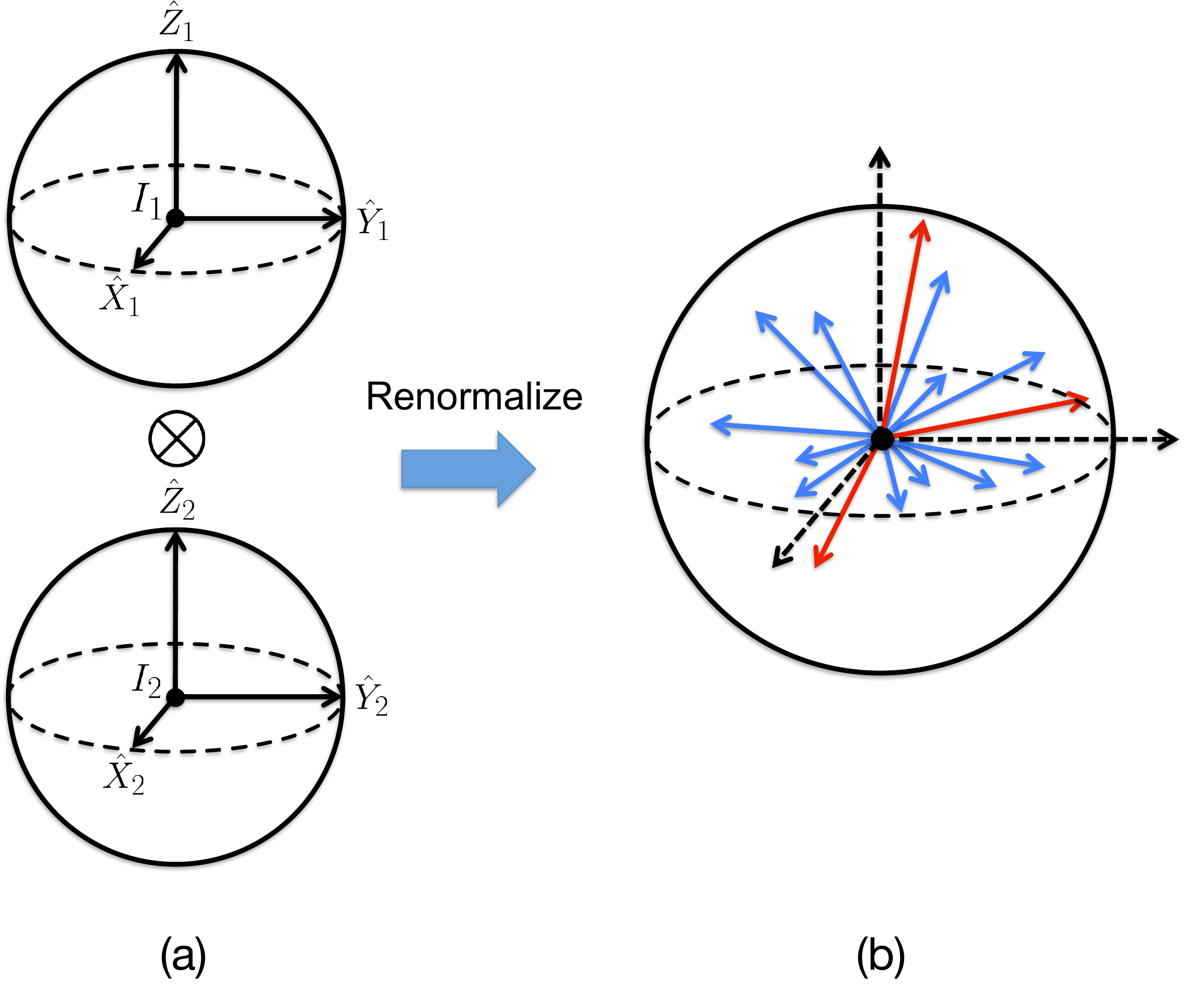}
	
	\caption{\label{fig:example} (Color online) Schematic diagram for the selection of optimal renormalized operators. In this example, two qubits are mapped unto a single renormalized qubit. 
	Without loss of generality, we choose the Pauli observables as a basis of observables for each qubit. The Pauli observables are represented as arrows on the Bloch spheres in a).
	By taking tensor product, we obtain 16 orthogonal operators. Each of those 16 operators will be mapped to a renormalized operator. 
	The identity operator (represented by big dot at origin) is mapped to the identity operator, but all the other 15 operators are mapped into some renormalized operators with different directions and magnitude, which are represented on the Bloch sphere of the renormalized qubit in b). Our task is to find 4 renormalized operators which span the renormalized Hilbert space and which are the most efficient in tomographic procedures. Intuitively, this optimal choice correspond to the 4 renormalized operators whose determinant is the largest, which are represented in red in b).	}
\end{figure}

This procedure is schematically explained in Fig.~\ref{fig:example}. 
In this example, we assume the physical Hilbert space with two qubits is renormalized into a Hilbert space with one qubit. Without loss of generality, we choose the Pauli observables $\mathcal{P}=\{I_k,X_k,Y_k,Z_k\}$ as a basis of observables for each qubit $k\in\{1,2\}$. By taking tensor product, we obtain 16 orthogonal operators of the form $\{ O^i_1 \otimes O^j_2\}$ where $O^i_1,O^j_2\in \mathcal{P}$. Each of those 16 operators will be mapped to a renormalized operator  $\{ {\cal A}(O^i_1 \otimes O^j_2) \}$. Each of those 16 operators will be mapped to a renormalized operator  $\{ {\cal A}(O^i_1 \otimes O^j_2) \}$. 

In order to span the renormalized Hilbert space, we only need four renormalized operators out of the sixteen available renormalized operators. 
Since ${\cal A}(I_1 \otimes I_1) = I$, we already have the renormalized identity operator so we need three more. 
Along with the identity operator, the three additional renormalized operators need to span the renormalized Hilbert space. 
Furthermore, we would like them to have a large determinant so that they cover the renormalized Hilbert space ``well'', 
in the sense that an arbitrary state in the renormalized Hilbert space can be reconstructed tomographically by a small number of repeated measurements. 
Thus, we choose the most informationally efficient set of operators to be the one with maximal determinant. 
In the example of Fig.~\ref{fig:example}, the set of operators with red-colored arrows maximize the determinant.
As we will see in Sec.~\ref{sec:ScalingMeasurements}, we will face the problem of renormalizing operators on 8 qubits into operators on 4 qubits, i.e., we will have to choose $4^4$ observables out of $4^8$.
The task of choosing the set of operators with maximal determinant turns out to be numerically intensive. 
In Sec.~\ref{numerical_obs}, we will introduce a heuristic to perform this task and show that this approach significantly reduces the overhead on the number of physical measurements.


\subsection{Changing the MERA geometry}

Another possible improvement to MERA tomography is to use the binary MERA geometry rather than the ternary MERA geometry. The ternary MERA geometry is unfavorable since it requires to identify the 5-site reduced density matrix in the past light cone of each isometry, while for the binary MERA, one needs to identify only 4-site reduced density matrix (see Fig.~\ref{fig:optimization_geometry}). This can make a significant difference on the number of measurements needed. 

Moreover, the binary MERA geometry has a structure which well-suited to apply the algorithm to select the optimal set of renormalized observables. 
Indeed, the past light cone of the 4-site reduced density matrix at level $\tau$ is 10 sites at level $\tau-1$ for the binary MERA geometry, much fewer than the 17 sites required in the ternary MERA geometry. Thus, we chose to select among the $4^{10}$ Pauli observables on 10 qubit a subset of $4^4$ which give maximal information about the 4-qubit density matrix at the next level using a heuristic which will be presented in details in Sec.~\ref{Greedy algorithm}. Numerically, we found that restricting the Pauli observables to act only on the 8 qubits indicated on Fig.~\ref{fig:binary} gave satisfactory results and made the running time and memory requirements of the heuristic more reasonable. In the next section, we describe the numerical results obtained by optimizing the choice of physical observables on a binary MERA geometry.

\begin{figure}

	\includegraphics[width=0.95 \columnwidth]{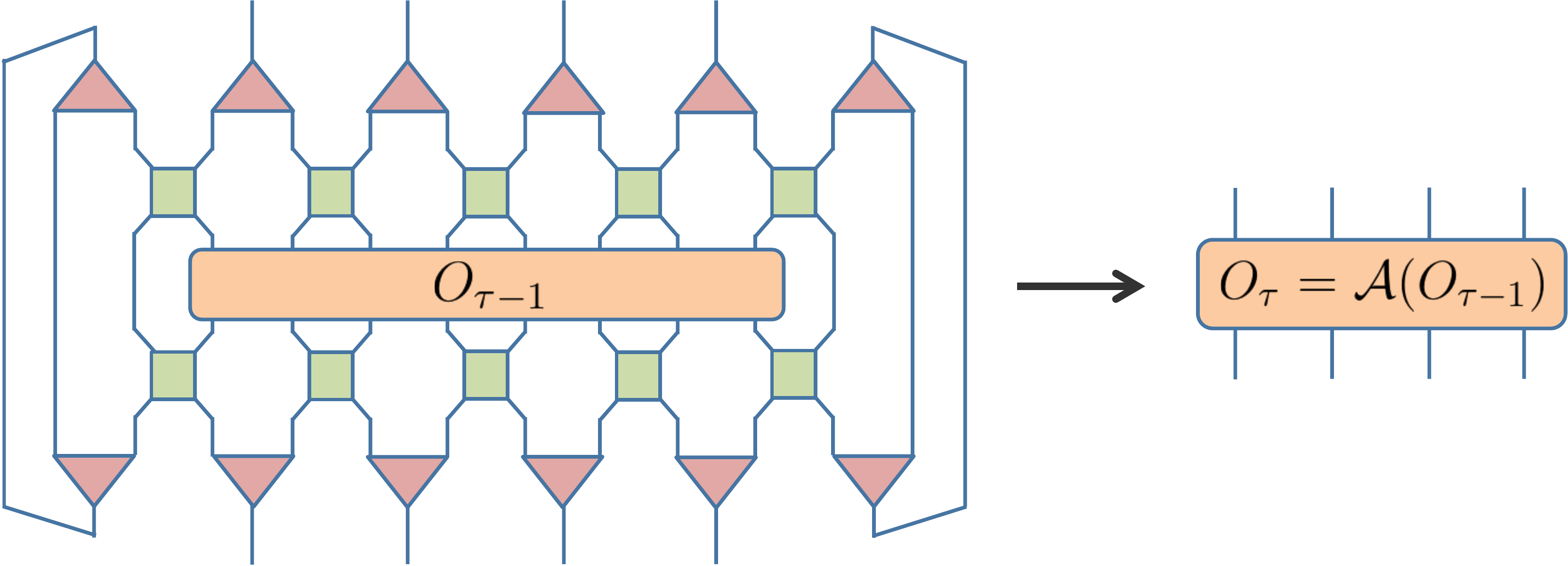}
	
	\caption{\label{fig:binary} (Color online) Tensor contraction in a binary MERA geometry. The observable $O_{\tau-1}$ acting on 8 sites at level $\tau-1$ is renormalized into a four-sites operator $O_{\tau}=\mathcal{A}(O_{\tau-1})$.}
\end{figure}


\section{Numerical results}\label{numerical_obs}

\subsection{Optimizing the choice of physical observables}

To optimize the choice of physical observables, we used a heuristic approach. We tested our approach on a 24-qubit ground state of the critical Ising and XX models. The state used to represent the experimental state is a $\chi=2$ binary MERA approximation to the ground state, which is obtained by a MERA energy minimization program.

\subsubsection{Greedy algorithm to maximize the determinant} \label{Greedy algorithm}

Given the disentanglers and isometries between levels $\tau-1$ and $\tau$ (which would have been identified thanks to tomography procedures), we calculated the $4^8$ renormalized operators corresponding to Fig.~\ref{fig:binary}. The task is now to choose a subset of $4^4$ renormalized operators that (i) span the space of the 4-qubit density matrix and (ii) span it in a way that maximizes the information acquisition (and thus minimize the number of repeated measurements). Criterion (ii) would be interesting to investigate from a theoretical point of view. In our work, we chose to \emph{maximize the absolute value of the determinant} of the set of renormalized operators as a proxy to maximizing the information acquisition. The intuition is that a large determinant will correspond to a set of renormalized operators which spans well the space of the 4-qubit density matrix.

To maximize the determinant, we used the following heuristic. We first chose the renormalized observable with the largest norm (choosing the norm induced by the Hilbert Schmidt inner product). Then, we vary over the remaining renormalized observables to find one that maximizes the determinant with the first one. We repeat this procedure over and over, obtaining a greedy algorithm to select the $4^4$ renormalized operators. This algorithm, named `Longest residual vector selection (LRV)' in~\cite{IZ04}, is one approach for the classic signal processing problem called \emph{matching pursuit}. 

The LRV algorithm is a heuristic which can be suboptimal. Let's illustrate such a situation by considering a simple two-dimensional space spanned by the orthonormal vectors $\hat{e}_1$ and $\hat{e}_2$. Consider the candidate set $\{ \hat{e}_1,\frac{(1-\epsilon)}{\sqrt{2}}(\hat{e}_1 + \hat{e}_2), \frac{(1-\epsilon)}{\sqrt{2}}(\hat{e}_1 - \hat{e}_2)\}$ where $\epsilon>0$ is small. We want to choose 2 vectors which maximize the absolute value of the determinant. By inspection, the best choice is $\{\frac{(1-\epsilon)}{\sqrt{2}}(\hat{e}_1 + \hat{e}_2), \frac{(1-\epsilon)}{\sqrt{2}}(\hat{e}_1 - \hat{e}_2)\}$ which has determinant $(1-\epsilon^2)$. However, the LRV algorithm will first select $ \hat{e}_1$ which has maximal norm and then select either one of the two remaining vector resulting in he choice $\{ \hat{e}_1,\frac{(1-\epsilon)}{\sqrt{2}}(\hat{e}_1 + \hat{e}_2)\}$ which has determinant $\frac{1-\epsilon}{\sqrt{2}}$. For a non-zero small $\epsilon$, the choice made by the LRV algorithm is dramatically worse than the optimal choice. 
In~\cite{IZ04}, an algorithm called `one by one replacement' is introduced to improve a (suboptimal) set of vectors by iteratively identifying bad choices in the current set and replacing it by a better vector from the candidate set of vectors. 

In our work, we first use the LRV algorithm to select $4^4$ renormalized operators from the candidate set made of $4^8$ 
operators $\mathcal{A}(O_{\tau-1})$ of Fig.~\ref{fig:binary}. We then use the `one by one replacement' algorithm to improve this initial choice. 
We now discuss how the choice of renormalized operators impacts the number of repeated measurements of physical observables needed to maintain accuracy.

\subsubsection{Maintaining the accuracy level using renormalized operators}

From now on, let us consider the set of chosen renormalized operators $\lbrace O^i_1 \rbrace = \lbrace {\cal A}(O^i_0)\rbrace$. Since the renormalized operators $O^i_1$ are not orthogonal, it is convenient to construct a set of orthogonal operators, following the approach introduced in~\cite{LP12}. We first define the Gram-matrix $G_{ij} = \text{tr}[O^i_1 (O^j_1)^\dagger]$, and diagonalize it to obtain the matrices $Z$ and $D$ such that $G = ZDZ^\dagger$. Then, we obtain a set of orthogonal operators $\{R_i, i=1,2,3,...\}$, which are eigenvectors of $G$, i.e.,

\begin{equation} \label{ortho}
R^i_1 = \frac{4}{\sqrt{D_{ii}}} \sum_j Z^\dagger_{ij} O^j_1 = \sum_j \beta_{ij} O^j_1
\end{equation}
where we introduced a normalization factor of 4 in order for the operators $R^i_1$ to have the same trace norm as 4-site Pauli observables. Using Eq.~\eqref{ortho}, we can relate the expectation value of the orthogonal operators $R^i_1$ to the expectation values of the physical observables by 

\begin{equation}
 \text{tr}(\rho_1 R^i_1) = \sum_j \beta_{ij} \text{tr}(\rho_1 O^j_1) = \sum_j \beta_{ij} \text{tr}(\rho_0 O^j_0)
\end{equation}
To assess how the number of repeated physical measurements $N_j$ on $O^j_0$ is increased, consider that the measurement of $R^i_1$ is a random variable whose variance is $\mathbb{V}(R^i_1)$. Since physical measurements are performed on different copies of the states, the physical measurements correspond to independent variables and 

\begin{equation}
 \mathbb{V}(R^i_1)=\sum_j |\beta_{ij}|^2 \mathbb{V}(O^j_0) \label{eq:variance}
\end{equation}

Let $M_i(\epsilon)$ be the number of measurements needed to achieve a desired variance $\epsilon$ for $i^{\textrm{th}}$
renormalized observable.
The variance $\mathbb{V}(O^j_0)$ is proportional to the inverse of the number of physical measurements $N_j$ of $O^j_0$. Thus, Eq.~\eqref{eq:variance} becomes
\begin{equation}
 \forall i \quad (M_i(\epsilon))^{-1}=\sum_j |\beta_{ij}|^2 N_j^{-1}
\end{equation}

Let's define the matrix $B_{ij} = |\beta_{ij}|^2$. We want to minimize the total number of measurements 
\begin{equation}
N=\sum_j N_j 
\end{equation}
while maintaining the \emph{minimum} precision $1/M_0$ for any orthogonal operator $R^i_1$.  We thus want minimize $N$ under the condition   
\begin{equation} \label{Restriction}
\forall i  \quad \sum_j B_{ij} N_j^{-1} \leq \frac{1}{M_0} \quad\quad N_j > 0
\end{equation}
Note that we cannot simply choose to minimize the $N_j$ independently since the precision level of different operators are not independent under the condition $N_i > 0$. We found numerically that in most instances we looked at, we cannot avoid the situation in which some observables have better precisions than the others.

Introducing the normalized variables $\tilde{N}_j=N_j/M_0$, we are faced with the optimization problem of minimizing \begin{equation}                                                                                                                              \sum_j \tilde{N}_j                                                                                                        \end{equation}
 under the constraint
\begin{equation} \label{Restriction2}
\forall i  \quad \sum_j B_{ij} \tilde{N}_j^{-1} \leq 1 \quad\quad \tilde{N}_j > 0
\end{equation}

For the $4^4\times4^4$ matrix $B_{ij}$, considering the maximal element $\gamma_j=\max_i B_{ij}$ for every column, we know that 
\begin{equation}
\forall j \quad \sum_i B_{ij} (\gamma_j)^{-1} \leq 4^4                                                                                                                                                                                                                                                                \end{equation}
Thus, a naive choice of $\tilde{N}_j$ would be to choose $4^4 \gamma_j$. Alternatively, we calculated $K_j=\sum_i B_{ij} (\gamma_j)^{-1}$ which is guaranteed to be smaller than $4^4$ and consider the biggest of them $K=\max_j K_j$. We can then take $\tilde{N}_j=K \gamma_j$ to guarantee that Eq.~\eqref{Restriction2} is satisfied. 

The total number of measurements $N$ is 
\begin{equation}
 N=\sum_i \tilde{N}_i M_0
\end{equation}
where $\tilde{N}_i$ can be interpreted as a multiplicative factor which ensures that the estimation of the expectation value using renormalized operators has the same precision as the one obtained using $M_0$ measurements on physical Pauli measurements. To report a single number, we introduce the \emph{conditioning factor} $S$, defined as the average multiplier in the number of measurements 
\begin{equation}
 S\equiv \frac{\sum_i \tilde{N}_i}{4^4} 
\end{equation}

\subsubsection{Estimation of the conditioning factor}

We wrote a simulation code to estimate the conditioning factor $S_{k\to k+1}$ corresponding to the multiplicative factor needed to estimate the 4-site density matrix at level $k+1$ using Pauli measurement at level $k$.

Note that, experimentally, we are interested in the multiplicative factor $S_{0 \to \tau}$ between physical Pauli measurement, i.e. measurement at level 0, and renormalized operator $\{R^i_\tau\}$ at level $\tau>0$, defined by Eq. \eqref{ortho}. Let's consider $\tau=2$ for concreteness. 

We would like to argue that 
\begin{equation}
S_{0 \to 2}\simeq S_{0 \to 1} \times S_{1 \to 2} \label{scale_approx}
\end{equation}
where the approximation comes the fact the ascending superoperator is not distributive, as we now explain.

Since the $\{R^i_1\}$ are orthogonal operators with the same normalization as Pauli operators $\{\Sigma^i_1\}$, there is a unitary transformation mapping between those two set of operators. Thus, mapping $\{R^i_1\}$ or Pauli operators $\{\Sigma^i_1\}$ at level 1 to $\{R^i_2\}$ will have the same overhead because of unitarity. However, when mapping Pauli observables at level 1 to renormalized operators at level 2, we take the tensor product of Pauli operators on two blocks to compute $S_{1 \to 2}$. The renormalized operators $\{R^i_1\}$ on two neighboring block do not obey this tensor product structure. This is illustrated in the bottom figure of Fig.~\ref{fig:explain}.

\begin{figure}
	\includegraphics[width=0.7\columnwidth]{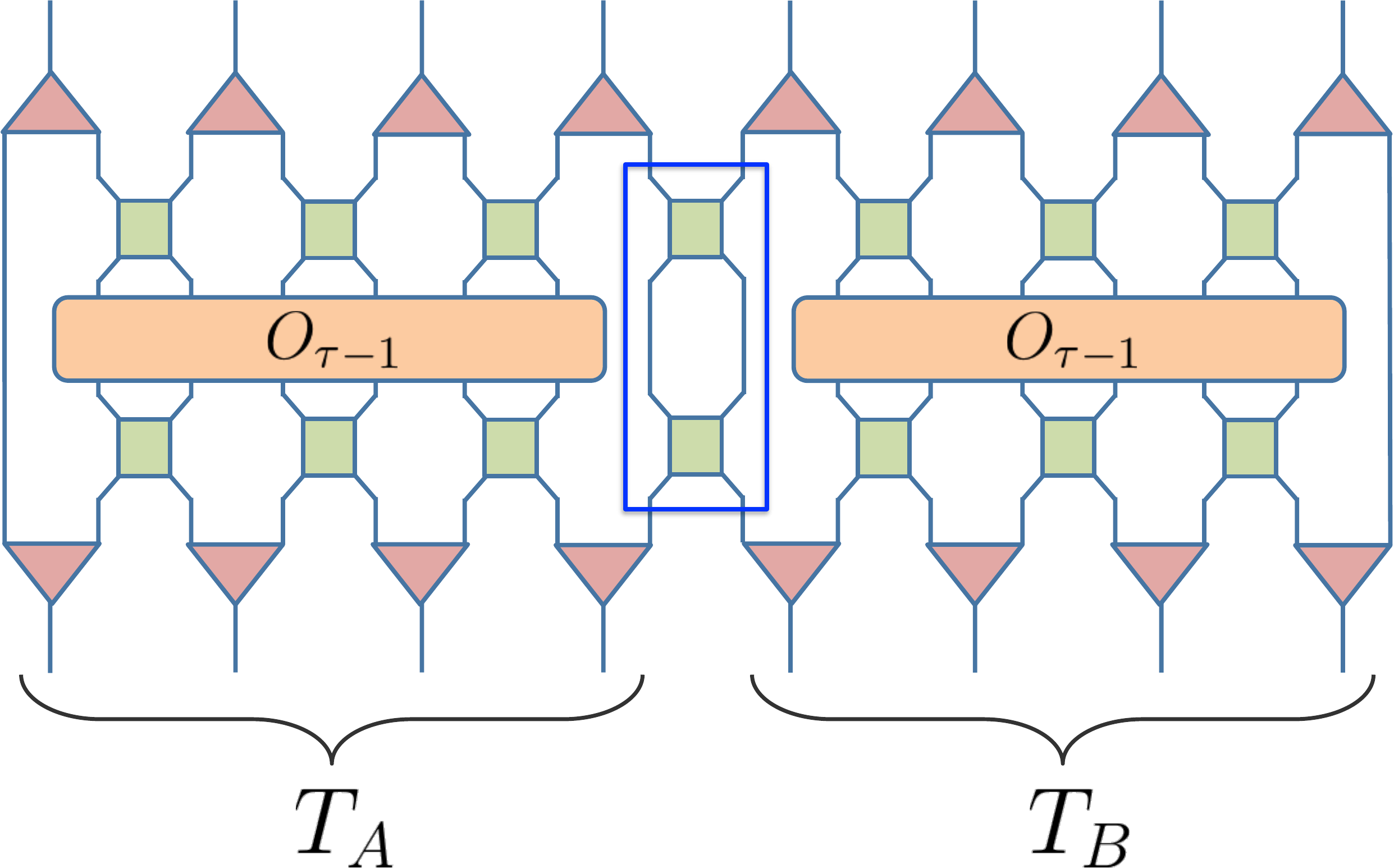}\\
	\vspace{0.1in}
	\includegraphics[width=0.95\columnwidth]{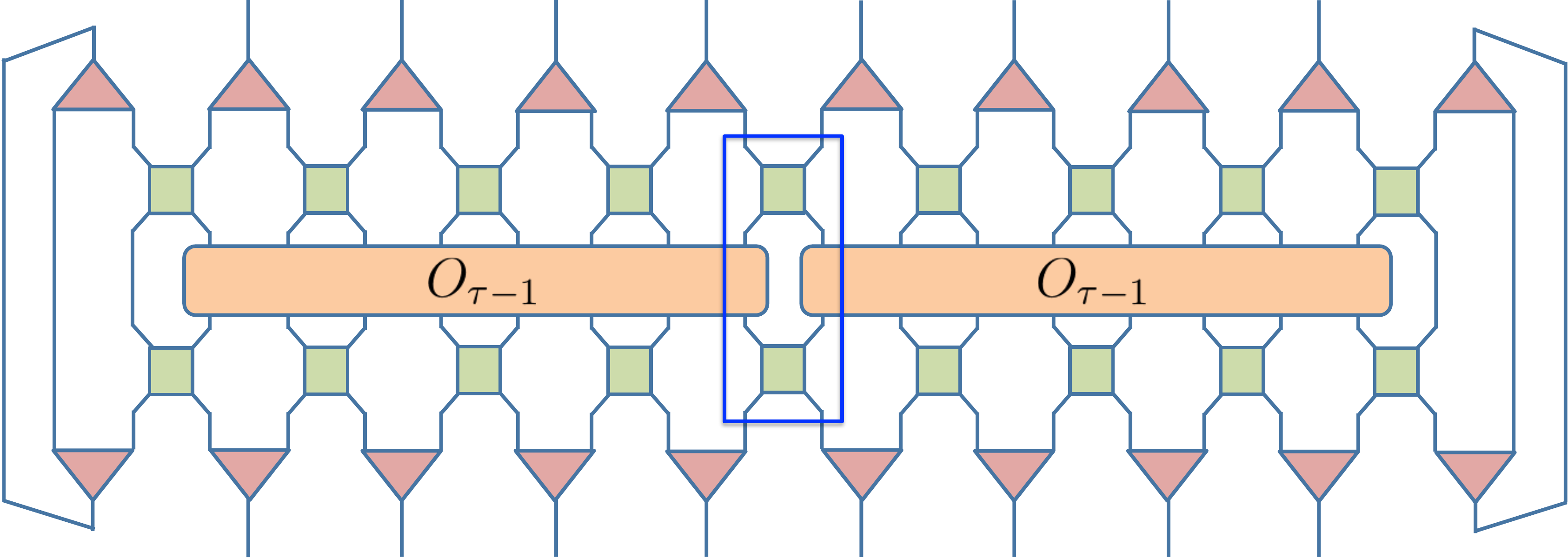}
	\vspace{0.1in}
	\caption{\label{fig:explain} (Color online) The tensor product of $O_{A} \otimes O_{B}$ renormalize under the superoperator $\mathcal{A}$ into the operator resulting from the tensor network contraction. In the upper figure, the distributive law~\eqref{distributive} holds so that the renormalized operators ${\mathcal A}(O_{A})$ and ${\mathcal A}(O_{B})$ can be computed independently and then multiplied. For the figure below, disentanglers within the blue box mixes the renormalized operators and Eq.~\eqref{distributive} does not hold.
	}
	\end{figure}

If the ascending superoperator were distributive, \emph{i.e.,}
\begin{equation}
 {\mathcal A}(O_{A} \otimes O_{B}) = {\mathcal A}(O_{A}) \otimes {\mathcal A}(O_{B}) \label{distributive}
\end{equation}
for physical operators $O_A$ and $O_B$, Eq.~\eqref{scale_approx} would be exact. However, this is not true for  if $O_A$ and $O_B$ are 8-sites Pauli operators and there will be a deviation $\mathcal{E}$ from the distributive law 
\begin{equation}
{\mathcal A}(O_{A} \otimes O_{B}) = {\mathcal A}(O_{A}) \otimes {\mathcal A}(O_{B}) + {\mathcal E}
\end{equation}
resulting from the mixing of operators at the blue box in Fig.~\ref{fig:explain}. However, since $\mathcal{E}$ results from the perturbation of 2 sites, we expect its effect on Eq.~\eqref{scale_approx} to be small since the operators prior to normalization act on 16 sites. We will now see that this intuition is backed by numerical smiulations.

To test the quality of the approximation in Eq.~\eqref{scale_approx}, we performed a simulation to get the exact scaling factor $S_{0 \to 2}$ between the physical level and the second renormalized level and compared it to the product $S_{0 \to 1} \times S_{1 \to 2}$ using a 16-qubit MERA approximation to the groundstate of the critical Ising model. The scaling factor $S_{0 \to 2}$ was obtained by following procedure (see Fig.~\ref{fig:explain}) : for each block $T_A$ and $T_B$, we considered the $4^6$ six-site Pauli operators $O_A^i$ (resp. $O_B^i$) and their renormalized counterparts $\mathcal{A}(O_A^i)$ (resp. $\mathcal{A}(O_B^i)$) to find the optimal basis ($4^4$) maximizing determinant. Then, we have two basis sets with $4^4$ operators $\{\mathcal{A}(O^i_A), i=1,2,...,4^4\}$ and $\{\mathcal{A}(O^i_B), i=1,2,...,4^4\}$. Now, to estimate the reduced density matrix on 4 sites at the second renormalized level, we need to find the best $4^4$ operators out of $\{\mathcal{A}(O_A^i \otimes O_B^j)\}$. The distributive law~\eqref{distributive} holds for $O^i_A$ and $O^j_B$ since they were based on non-interfering six-site operators at the physical level, we can easily calculate the $4^8$ operators $\mathcal{A}(O_A^i \otimes O_B^j)=\mathcal{A}(O_A^i)\otimes \mathcal{A}(O_B^j)$. Now, out of these $4^8$ operators, we renormalize them again using the second renormalized layer, and then find the $4^4$ optimal operators.
                                                                                                                                                                                                                                                                                                                                                                                                                                                                                                                               
We ran the simulation several times and obtained values for the scaling factor $S_{0 \to 2}$ ranging between 23 and 27, which is comparable to $S_{0 \rightarrow 1} \times S_{1 \rightarrow 2}$ which range between 25 and 36. Therefore, we consider the approximation in Eq.~\eqref{scale_approx} to be valid. In fact, the method used to test this assumption gives a scalable way to obtain the optimal set of physical Pauli operators to estimate reduced density matrix at higher renormalized level. 

Now that we assessed the quality of the approximation in Eq.~\eqref{scale_approx}, we will use the formula
\begin{equation}
 S_{0\to\ell} \simeq \prod_{k=0}^{\ell-1} S_{k\to k+1}
\end{equation}
to approximate the total number of measurements needed for MERA tomography at level $\ell$.

We estimated the conditioning factor for MERA tomography on a 24-qubit translation-invariant binary MERA approximation of the ground state of the critical Ising model with periodic boundary condition. Note that the finite system size is too small to reach scale-invariance, which we expect to hold rigorously in the thermodynamic limit. However, translation invariance guarantees that disentangler and isometries are the same in a given layer of the MERA circuit. Since a 24-qubit binary MERA circuit contains 3 renormalization layers, we obtained three conditioning factor $S_{k \to k+1}$ for $k=0,1,2$.

Results for $S_{0 \to \ell}$ for $\ell=1,2,3$ are presented on Fig.~\ref{fig:conditioning} for 10 different MERA approximation of the ground state of the critical Ising model for 24 qubit. Since the disentanglers and isometries are different for every energy minimization, the condition factors also vary. 

\begin{figure}
	\includegraphics[scale=0.6, angle=270]{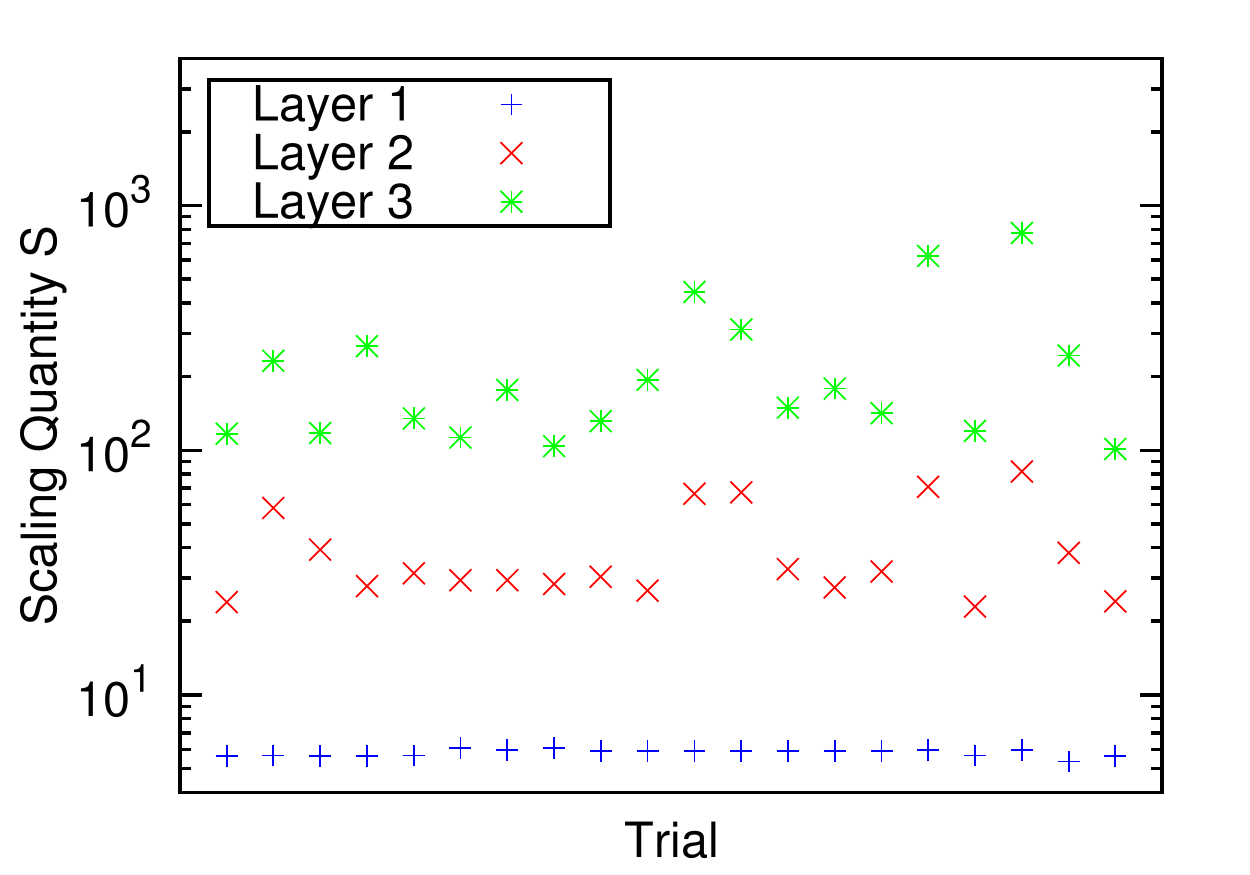}
	\caption{\label{fig:conditioning} (Color online) The behavior of conditioning factor $S_{0\to \ell}$ between level $\ell=1,2,3$ and the physical
	level for different reconstruction of a 24-qubit groundstate of the critical Ising model. We can see that $S_{0\to \ell}$ scales roughly like $6^\ell$. Interestingly, the conditioning factor $S_{0\to 1}$ quantity between the physical level and the first level is very uniform. This is much better than $(\lambda_{block})^k \simeq 2400^\ell$ scaling obtained by using the naive ternary MERA approach.}
\end{figure}

The important feature of the numerical result is that our improvements, in particular the heuristic choice of observables, dramatically improve the scaling of the number of measurements. Indeed, the multiplicative overhead is about $\lambda=6$ between each layers, which is a dramatic improvement over the the multiplicative overhead of $2400$ in the case of the naive ternary approach.

\subsection{Estimates of the total number of measurements required for MERA tomography}

\subsubsection{System size up to 100 qubits}

We are now in position to give an estimate of the total number of physical measurements needed to perform MERA tomography. We estimate these numbers by using the conditioning factor and by choosing the reference number of measurements to be $M_0=100$. This is the number of measurements used to estimate the expectation value of every physical Pauli operators in the tomography of an 8-qubit W state on cold atoms~\cite{HHR+05}.

In Fig.~\ref{fig:actualnumber}, we compare the total number of measurements $N$ for binary and ternary MERA, in both cases for the groundstate of the critical 1D Ising and critical 1D XX models, as a function of the size of a quantum system $n$, i.e., the number of qubits. For ternary MERA, we use the naive approach based on observables which are eigenvectors of the ascending superoperator of Fig.~\ref{fig:scale}. For binary MERA, we used the heuristic choice of observables which maximizes the determinant, obtaining a condition factor $S$ varying between 5 and 6 for critical Ising and between 3 and 3.5 for XX model. 

For the system with total number of qubits ${\cal N} = D\cdot 2^m$ and scaling factor $S$, the total number of measurements was calculated through the formula 
\begin{equation}
 N=100\times \left[ 4^4 \sum_{\tau=0}^{m-3} 2^{m-\tau+1} S^{\tau}   + 4^{D} S^{m-2} \right]
\end{equation}
where $m$ is the total number of layers, $2^{m-\tau+1}$ is the number of isometries between level $\tau$ and $\tau+1$ and the last term comes from the fact that at level $m-2$ there are $D \leq 4$ renormalized particles. The formula was derived in the following way: we assumed that each physical observable was measured with an accuracy of 100 measurements and that this accuracy for maintained for renormalized observables. Thus, for the renormalized observables at layer $\tau$, we need $100\cdot S^\tau$ number of measurements where S is the (average) scaling factor between layers. For each layer, we need to perform brute force tomography on $2^{m-\tau+1}$ number of 4-sites density matrices, each of which having $4^4$ observables. This explains the term inside the summation. The last term arises from the top level of the MERA circuit whose number of sites $D$ is smaller than 4.

On Fig.~\ref{fig:actualnumber}, we also indicated the scaling of brute-force quantum state tomography using the $3^n$ approach of~\cite{AJK04}. The figure confirms the asymptotic advantage of MERA tomography whose polynomial scaling $N \propto n^{\log{S}}$ outperforms the exponential cost of brute-force tomography. Crucially, it also shows this advantage for small system size.

\begin{figure}
	\includegraphics[height=0.92\columnwidth, angle=270]{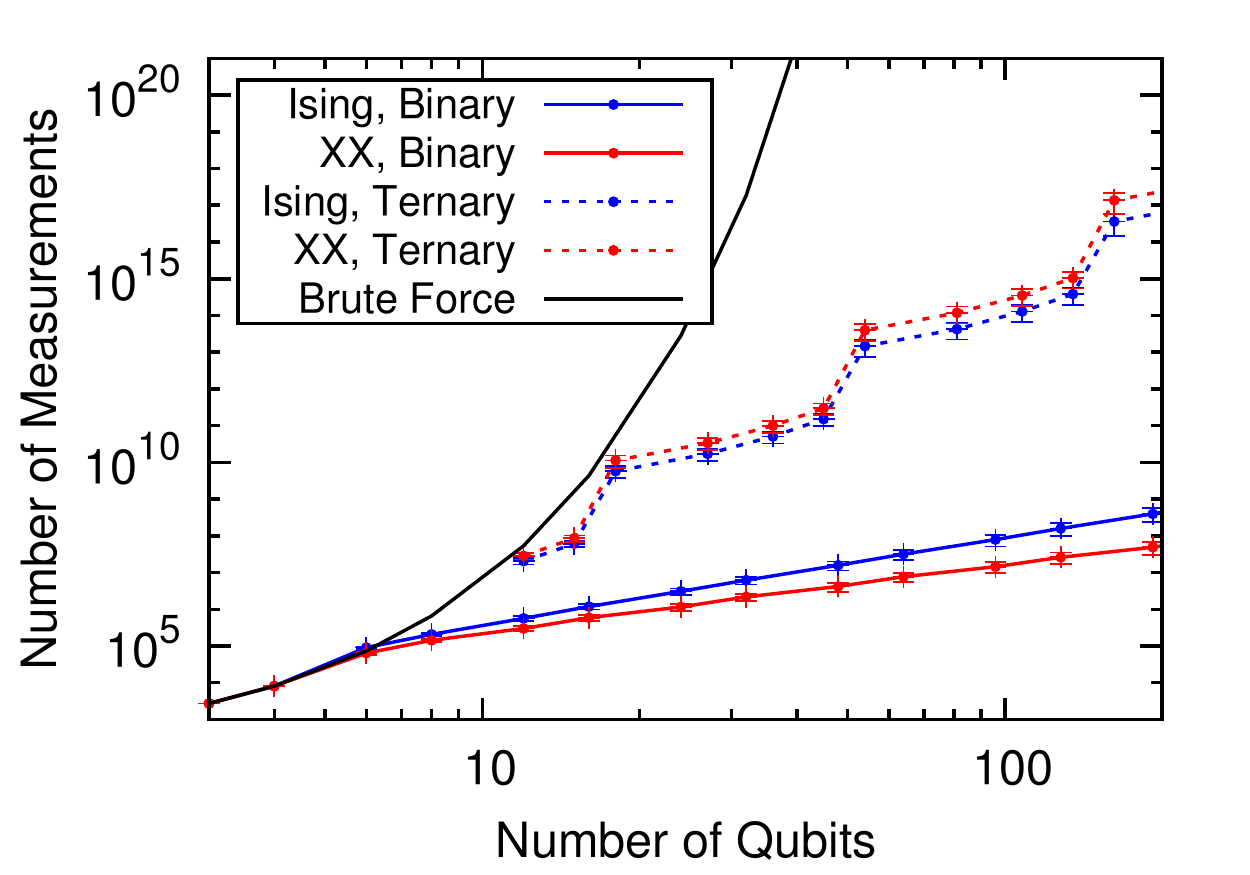}
	\caption{\label{fig:actualnumber} (Color online)
	The number of measurements required versus the number of qubits in MERA tomography.
	For binary MERA, we selected the renormalized operators using the heuristic described in Sec.~\ref{Greedy algorithm} and for ternary MERA, we used the naive approach of taking one-site operators. The error bars account for the uncertainty in the condition factor. We used $5<S_\mathrm{Ising}<6$ and $3<S_\mathrm{XX}<3.5$ 
}
\end{figure}

\subsubsection{Focus on system size up to 24 qubits}

To better appreciate the performance of MERA tomography for system size relevant to experiments, we plotted the total number of measurements needed for binary MERA of the critical Ising and XX models on Fig.~\ref{fig:finite_size}. We compared the number of measurements to the 656,000 measurements used in the largest tomography experiment performed to date, on a 8-qubit system~\cite{HHR+05}.

\begin{figure}
	\includegraphics[height=0.9\columnwidth, angle=270]{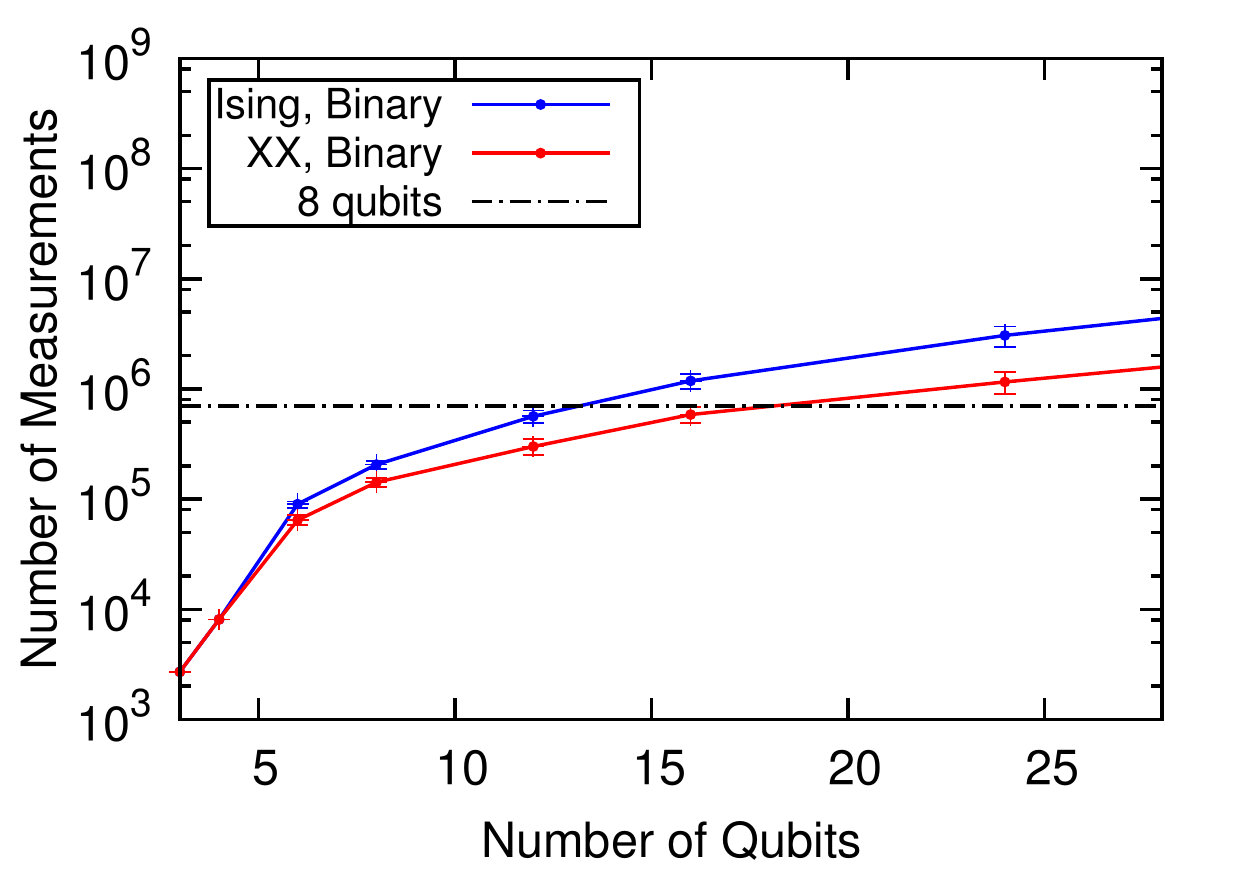}
	\caption{\label{fig:finite_size} (Color online) Magnified version of Fig.~\ref{fig:actualnumber} for binary MERA tomography on the critical Ising and XX models. The black dotted line represents the number of measurements required for qubyte (8 qubits) by brute-force tomography. By optimizing the choice of physical observables, MERA tomography can perform tomography on 16 qubit system (Ising) and 24-qubit system (XX) for the similar number of measurement as brute-force tomography on 8 qubits.}
\end{figure}


Fig.~\ref{fig:finite_size} shows that MERA tomography outperforms brute-force tomography for system sizes that are accessible experimentally, and requires a reasonable experimental effort. More specifically, using our scheme we can perform MERA tomography on a 16-qubit ground state of the critical Ising model and a 24-qubit ground state of the critical XX model with at most twice the number of measurements of the qubyte experiment~\cite{HHR+05} on 8 qubits. Hence, MERA tomography, for a comparable experimental effort, allows to probe quantum systems \emph{twice to three times} larger than brute-force quantum tomography. Moreover, the numerical processing required by MERA tomography is very simple and requires at most a few hours of running time, which is a dramatic improvement over the running time of the Maximum Likelihood Estimation (MLE) used to infer the quantum state compatible with the experimental data~\cite{Blume-Kohout10}.

\section{Propagation of errors} \label{sec:propagation-errors}

In this section, we analyze how errors accumulate and propagate in our MERA tomography scheme. The MERA tomography procedure aims to reconstruct a MERA state $\rho^\mathrm{tomo}$ that approximates the experimental state, defined by
\begin{equation}
\rho^\mathrm{tomo}=U_{0\to m}^\dagger \rho_{m}^\mathrm{trunc} U_{0 \to m}
\end{equation}
where $U_{0 \to m=\prod_{\tau=0}^{m-1} U_{\tau \to \tau+1}}$ is the product of every layer transformation $U_{\tau \to \tau+1}$, i.e, the global MERA circuit and $\rho_{m}^\mathrm{trunc}$ is the output after the final $m$-th layer.
However, our reconstructed state will deviate from the experimental state $\rho_0$ due to (1) imperfect estimation of expectation values of physical observables and (2) truncation errors since each isometry throws out part of the Hilbert space.

A detailed analysis of the impact of truncation errors is presented in the Appendix. We will highlight the key results in this Section and refer the reader to the Appendix for the technical proofs. 

Loss of information in the MERA circuit is due to truncation errors. For every level $\tau \geq 1$, consider $\rho_\tau$ to be the state obtained from the experimental state $\rho_0$ by applying every gates of the quantum circuit before truncation at level $\tau$. Define $\rho_\tau^\mathrm{trunc}$ to be the normalized state obtained by keeping the eigenvectors corresponding to the $\chi$ largest eigenvalues of $\rho_\tau$. Those different states are represented on Fig.~\ref{fig:density2}.

\begin{figure}[h] 
	\includegraphics[width=0.9\columnwidth]{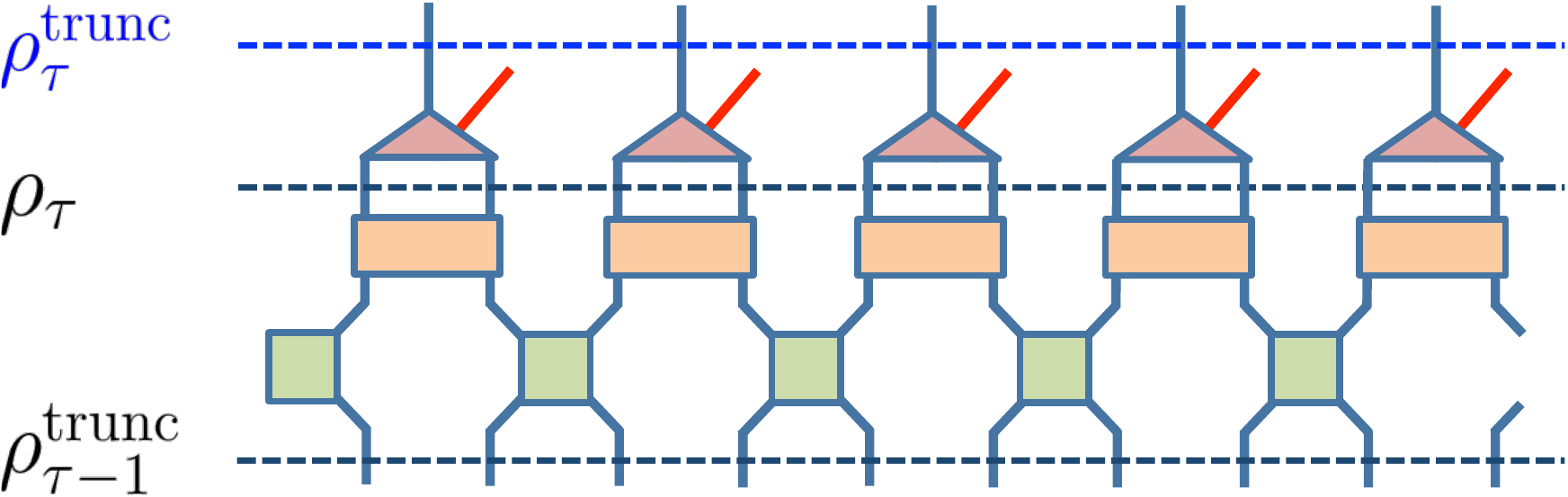}
	\caption{\label{fig:density2} (Color online) A single layer of a MERA circuit which unitarily transform the state $\hat{\rho}_{\tau-1}^{trunc}$ to $\rho_\tau$ before truncating it to $\hat{\rho}_{\tau}^{trunc}$. Red line represents subspace thrown out by isometries.
	}
\end{figure}

We prove in the Appendix (see Lemma~\ref{KeyLemma}) that 
\begin{equation}
 D(\rho_0,\rho^{tomo}) \leq \sum_{\tau=1}^m D(\rho_\tau,\rho_\tau^{trunc}) \label{distance-step1}
\end{equation}
where we use the trace distance $D(\rho,\sigma)=\frac{1}{2}\| \rho - \sigma \|_1$ and $\|A\|_1$ is the sum of the singular values. 

However, the disentanglers and isometries are computed on blocks of the state $\rho_\tau^{rec}$ which is reconstructed for evaluating how expectation values physical observables relate to expectation values of ascended operators on renormalized particles. Because of truncation errors, the ascended operators will be erroneous and the estimated  $\rho_\tau^\mathrm{rec}$ will differ from $\rho_\tau$. We can use the triangle inequality to get
\begin{equation}
 D(\rho_0,\rho^\mathrm{tomo}) \leq \sum_{\tau=1}^m D(\rho_\tau,\rho_\tau^\mathrm{rec}) +  \sum_{\tau=1}^m D(\rho_\tau^\mathrm{rec},\rho_\tau^{trunc}) \label{distance-step2}
\end{equation}

The second term of Eq.~\eqref{distance-step2} is straightforward. It is the intrinsic error introduced by truncation errors and we prove that 
\begin{equation}
 \sum_{\tau=1}^m D(\rho_\tau^\mathrm{rec},\rho_\tau^\mathrm{trunc}) \leq \sum_{\tau=1}^m \sum_k \epsilon^\tau_k \label{term2}
\end{equation}
where $k$ indexes the different isometries at level $\tau$ and $\epsilon^\tau_k$ is the probability weight being removed by the truncation. Note that this term is simply the sum of all truncation errors and will scale \emph{linearly} with the size of the system.

The first term of Eq.~\ref{distance-step2} is due to the relations between renormalized operators and physical observables. It will be related not only to truncation errors at level $\tau$ but also to all truncation errors at previous levels. We prove that 
\begin{equation}
 \sum_{\tau=1}^m D(\rho_\tau,\rho_\tau^\mathrm{rec}) \leq \frac{1}{2} \sum_{\tau=1}^m \sum_{\ell=1}^\tau \sum_k 
 \epsilon^\ell_k \| \sum_{i,j} \beta_{ij} R_i \|_1 \label{term1}
\end{equation}
where the matrix $\beta_{ij}$ and the $R_i$ are defined in Sec.~\ref{sec:Numerical}. Note that this term will scale \emph{quadratically} with system size since truncation errors in previous levels influence truncation errors in subsequent levels. However, since the truncation error $\epsilon$ gets dramatically smaller for higher layer (see Fig.~\ref{fig:numeric}), and we observed numerically that the  summation $\| \sum_{ij} \beta_{ij} R_i \|$ has the order of unity for critical Ising, we expect this contribution to be comparable to the Eq.~\eqref{term2}. 

The final bound will simply be the sum of Eq.~\eqref{term1} and Eq.~\eqref{term2}. Crucially, every term that appears in this bound can be estimated during the tomographic procedure. Thus, it can be used as a \emph{certificate} to check a posteriori if the reconstructed MERA state is indeed close to the experimental state. We expect this to give reasonable bounds in the limit where the truncation errors are very small. Alternatively, one can directly estimate fidelity to assess the closeness between the experimental state and the state obtained by MERA tomography using Monte Carlo fidelity estimation~\cite{FL11,SLP11}. 


\section{Conclusion}
In this work, we investigated and improved upon the original MERA tomography method introduced in~\cite{LP12}.

We showed that the  
scaling of the number of measurements required to maintain accuracy presented in~\cite{LP12} was only valid for single site observable and that its straightforward application to multi-site observables led to an unreasonable overhead.
To circumvent this issue, we suggested to use a different MERA geometry, namely binary MERA, which required performing brute-force tomography on block of 4 renormalized sites (instead of 5 for the ternary MERA case). Furthermore, we introduced a heuristic to identify the physical measurement which give the most information about renormalized particles, in order to minimize the number of physical measurement required. We tested this approach numerically and found that the total number of physical measurements needed to perform MERA tomography on moderate size system is reasonable for experimentalist. For instance, performing MERA tomography on the ground state of the critical Ising model on 16 qubits requires only twice the number of physical measurements needed to perform brute-force quantum state tomography on 8 qubits.
Finally, we gave a deeper understanding of propagation of error in MERA tomography. We bounded the distance between the experimental state and the state reconstructed by MERA tomography in terms of quantities that are estimated locally troughout the tomography procedure. In particular, the propagation of error when using renormalized observables was quantified, and turned out to be closely related to the scaling factor $S$. Since the deviation of the reconstructed state from the experimental state is bounded by a quantity which can be estimated during the tomography procedure, this bound can be used as a certificate to justify \emph{a posteriori} that the experimental state was close to a MERA state.

\begin{acknowledgments}
Most of this work was realized when Jong Yeon was a Summer Undergraduate Research Fellow (SURF) at the California Institute of Technology. OLC acknowledge funding provided by the Institute for Quantum Information and Matter, an NSF Physics Frontiers Center with support of the Gordon and Betty Moore Foundation (Grants No. PHY-0803371 and PHY-1125565) and the Fonds de recherche Qu\'ebec - Nature et Technologies (FRQNT). Both authors would like to thank Glen Evenbly for insightful discussions and numerical help and John Preskill for helpful comments throughout the project. 
\end{acknowledgments}

\bibliographystyle{apsrev}
\bibliography{tomo}

\begin{thebibliography}{31}
\expandafter\ifx\csname natexlab\endcsname\relax\def\natexlab#1{#1}\fi
\expandafter\ifx\csname bibnamefont\endcsname\relax
  \def\bibnamefont#1{#1}\fi
\expandafter\ifx\csname bibfnamefont\endcsname\relax
  \def\bibfnamefont#1{#1}\fi
\expandafter\ifx\csname citenamefont\endcsname\relax
  \def\citenamefont#1{#1}\fi
\expandafter\ifx\csname url\endcsname\relax
  \def\url#1{\texttt{#1}}\fi
\expandafter\ifx\csname urlprefix\endcsname\relax\def\urlprefix{URL }\fi
\providecommand{\bibinfo}[2]{#2}
\providecommand{\eprint}[2][]{\url{#2}}

\bibitem[{\citenamefont{Hastings}(2007)}]{Hastings07}
\bibinfo{author}{\bibfnamefont{M.~B.} \bibnamefont{Hastings}},
  \bibinfo{journal}{Journal of Statistical Mechanics: Theory and Experiment}
  \textbf{\bibinfo{volume}{2007}}, \bibinfo{pages}{P08024}
  (\bibinfo{year}{2007}).

\bibitem[{\citenamefont{Or\'us}(2014)}]{Orus14}
\bibinfo{author}{\bibfnamefont{R.}~\bibnamefont{Or\'us}},
  \bibinfo{journal}{Annals of Physics} \textbf{\bibinfo{volume}{349}},
  \bibinfo{pages}{117 } (\bibinfo{year}{2014}), ISSN \bibinfo{issn}{0003-4916}.

\bibitem[{\citenamefont{H\"{a}ffner et~al.}(2005)\citenamefont{H\"{a}ffner,
  H\"{a}nsel, Roos, Benhelm, Chek-al Kar, Chwalla, K\"{o}rber, Rapol, Riebe,
  Schmidt et~al.}}]{HHR+05}
\bibinfo{author}{\bibfnamefont{H.}~\bibnamefont{H\"{a}ffner}},
  \bibinfo{author}{\bibfnamefont{W.}~\bibnamefont{H\"{a}nsel}},
  \bibinfo{author}{\bibfnamefont{C.~F.} \bibnamefont{Roos}},
  \bibinfo{author}{\bibfnamefont{J.}~\bibnamefont{Benhelm}},
  \bibinfo{author}{\bibfnamefont{D.}~\bibnamefont{Chek-al Kar}},
  \bibinfo{author}{\bibfnamefont{M.}~\bibnamefont{Chwalla}},
  \bibinfo{author}{\bibfnamefont{T.}~\bibnamefont{K\"{o}rber}},
  \bibinfo{author}{\bibfnamefont{U.~D.} \bibnamefont{Rapol}},
  \bibinfo{author}{\bibfnamefont{M.}~\bibnamefont{Riebe}},
  \bibinfo{author}{\bibfnamefont{P.~O.} \bibnamefont{Schmidt}},
  \bibnamefont{et~al.}, \bibinfo{journal}{Nature}
  \textbf{\bibinfo{volume}{438}}, \bibinfo{pages}{643} (\bibinfo{year}{2005}),
  ISSN \bibinfo{issn}{1476-4687}.

\bibitem[{\citenamefont{Monz et~al.}(2011)\citenamefont{Monz, Schindler,
  Barreiro, Chwalla, Nigg, Coish, Harlander, H\"ansel, Hennrich, and
  Blatt}}]{MSB+11}
\bibinfo{author}{\bibfnamefont{T.}~\bibnamefont{Monz}},
  \bibinfo{author}{\bibfnamefont{P.}~\bibnamefont{Schindler}},
  \bibinfo{author}{\bibfnamefont{J.~T.} \bibnamefont{Barreiro}},
  \bibinfo{author}{\bibfnamefont{M.}~\bibnamefont{Chwalla}},
  \bibinfo{author}{\bibfnamefont{D.}~\bibnamefont{Nigg}},
  \bibinfo{author}{\bibfnamefont{W.~A.} \bibnamefont{Coish}},
  \bibinfo{author}{\bibfnamefont{M.}~\bibnamefont{Harlander}},
  \bibinfo{author}{\bibfnamefont{W.}~\bibnamefont{H\"ansel}},
  \bibinfo{author}{\bibfnamefont{M.}~\bibnamefont{Hennrich}}, \bibnamefont{and}
  \bibinfo{author}{\bibfnamefont{R.}~\bibnamefont{Blatt}},
  \bibinfo{journal}{Phys. Rev. Lett.} \textbf{\bibinfo{volume}{106}},
  \bibinfo{pages}{130506} (\bibinfo{year}{2011}).

\bibitem[{\citenamefont{Hradil}(1997)}]{Hradil97}
\bibinfo{author}{\bibfnamefont{Z.}~\bibnamefont{Hradil}},
  \bibinfo{journal}{Phys. Rev. A} \textbf{\bibinfo{volume}{55}},
  \bibinfo{pages}{1561} (\bibinfo{year}{1997}).

\bibitem[{\citenamefont{Altepeter et~al.}(2004)\citenamefont{Altepeter, James,
  and Kwiat}}]{AJK04}
\bibinfo{author}{\bibfnamefont{J.~B.} \bibnamefont{Altepeter}},
  \bibinfo{author}{\bibfnamefont{D.~F.~V.} \bibnamefont{James}},
  \bibnamefont{and} \bibinfo{author}{\bibfnamefont{P.~G.} \bibnamefont{Kwiat}},
  in \emph{\bibinfo{booktitle}{Quantum State Estimation}}, edited by
  \bibinfo{editor}{\bibfnamefont{M.}~\bibnamefont{Paris}} \bibnamefont{and}
  \bibinfo{editor}{\bibfnamefont{J.}~\bibnamefont{Reh\'{a}\v{c}ek}}
  (\bibinfo{publisher}{Springer Berlin Heidelberg}, \bibinfo{year}{2004}), vol.
  \bibinfo{volume}{145}, chap.~\bibinfo{chapter}{4}, pp.
  \bibinfo{pages}{113--145}, ISBN \bibinfo{isbn}{978-3-540-44481-7}.

\bibitem[{\citenamefont{Poulin et~al.}(2011)\citenamefont{Poulin, Qarry, Somma,
  and Verstraete}}]{PQS+11}
\bibinfo{author}{\bibfnamefont{D.}~\bibnamefont{Poulin}},
  \bibinfo{author}{\bibfnamefont{A.}~\bibnamefont{Qarry}},
  \bibinfo{author}{\bibfnamefont{R.}~\bibnamefont{Somma}}, \bibnamefont{and}
  \bibinfo{author}{\bibfnamefont{F.}~\bibnamefont{Verstraete}},
  \bibinfo{journal}{Phys. Rev. Lett.} \textbf{\bibinfo{volume}{106}},
  \bibinfo{pages}{170501} (\bibinfo{year}{2011}).

\bibitem[{\citenamefont{Eisert et~al.}(2010)\citenamefont{Eisert, Cramer, and
  Plenio}}]{ECP10}
\bibinfo{author}{\bibfnamefont{J.}~\bibnamefont{Eisert}},
  \bibinfo{author}{\bibfnamefont{M.}~\bibnamefont{Cramer}}, \bibnamefont{and}
  \bibinfo{author}{\bibfnamefont{M.}~\bibnamefont{Plenio}},
  \bibinfo{journal}{Rev. Mod. Phys.} \textbf{\bibinfo{volume}{82}},
  \bibinfo{pages}{277} (\bibinfo{year}{2010}),
  \urlprefix\url{http://link.aps.org/doi/10.1103/RevModPhys.82.277}.

\bibitem[{\citenamefont{Verstraete and Cirac}(2006)}]{VC06}
\bibinfo{author}{\bibfnamefont{F.}~\bibnamefont{Verstraete}} \bibnamefont{and}
  \bibinfo{author}{\bibfnamefont{J.~I.} \bibnamefont{Cirac}},
  \bibinfo{journal}{Phys. Rev. B} \textbf{\bibinfo{volume}{73}},
  \bibinfo{pages}{094423} (\bibinfo{year}{2006}).

\bibitem[{\citenamefont{Affleck et~al.}(1987)\citenamefont{Affleck, Kennedy,
  Lieb, and Tasaki}}]{AKL+87}
\bibinfo{author}{\bibfnamefont{I.}~\bibnamefont{Affleck}},
  \bibinfo{author}{\bibfnamefont{T.}~\bibnamefont{Kennedy}},
  \bibinfo{author}{\bibfnamefont{E.~H.} \bibnamefont{Lieb}}, \bibnamefont{and}
  \bibinfo{author}{\bibfnamefont{H.}~\bibnamefont{Tasaki}},
  \bibinfo{journal}{Phys. Rev. Lett.} \textbf{\bibinfo{volume}{59}},
  \bibinfo{pages}{799} (\bibinfo{year}{1987}), ISSN \bibinfo{issn}{0031-9007}.

\bibitem[{\citenamefont{Fannes et~al.}(1992)\citenamefont{Fannes, Nachtergaele,
  and Werner}}]{FNW92}
\bibinfo{author}{\bibfnamefont{M.}~\bibnamefont{Fannes}},
  \bibinfo{author}{\bibfnamefont{B.}~\bibnamefont{Nachtergaele}},
  \bibnamefont{and} \bibinfo{author}{\bibfnamefont{R.~F.}
  \bibnamefont{Werner}}, \bibinfo{journal}{Communications in mathematical
  physics} \textbf{\bibinfo{volume}{144}}, \bibinfo{pages}{443}
  (\bibinfo{year}{1992}).

\bibitem[{\citenamefont{Kl{\"u}mper et~al.}(1993)\citenamefont{Kl{\"u}mper,
  Schadschneider, and Zittartz}}]{KSZ93}
\bibinfo{author}{\bibfnamefont{A.}~\bibnamefont{Kl{\"u}mper}},
  \bibinfo{author}{\bibfnamefont{A.}~\bibnamefont{Schadschneider}},
  \bibnamefont{and} \bibinfo{author}{\bibfnamefont{J.}~\bibnamefont{Zittartz}},
  \bibinfo{journal}{EPL (Europhysics Letters)} \textbf{\bibinfo{volume}{24}},
  \bibinfo{pages}{293} (\bibinfo{year}{1993}).

\bibitem[{\citenamefont{Klumper et~al.}(1991)\citenamefont{Klumper,
  Schadschneider, and Zittartz}}]{KSZ91}
\bibinfo{author}{\bibfnamefont{A.}~\bibnamefont{Klumper}},
  \bibinfo{author}{\bibfnamefont{A.}~\bibnamefont{Schadschneider}},
  \bibnamefont{and} \bibinfo{author}{\bibfnamefont{J.}~\bibnamefont{Zittartz}},
  \bibinfo{journal}{Journal of Physics A: Mathematical and General}
  \textbf{\bibinfo{volume}{24}}, \bibinfo{pages}{L955} (\bibinfo{year}{1991}).

\bibitem[{\citenamefont{White}(1992)}]{White92}
\bibinfo{author}{\bibfnamefont{S.~R.} \bibnamefont{White}},
  \bibinfo{journal}{Phys. Rev. Lett.} \textbf{\bibinfo{volume}{69}},
  \bibinfo{pages}{2863} (\bibinfo{year}{1992}), ISSN \bibinfo{issn}{0031-9007}.

\bibitem[{\citenamefont{Schollw\"{o}ck}(2011)}]{Schollwock11}
\bibinfo{author}{\bibfnamefont{U.}~\bibnamefont{Schollw\"{o}ck}},
  \bibinfo{journal}{Ann. Phys.} \textbf{\bibinfo{volume}{326}},
  \bibinfo{pages}{96} (\bibinfo{year}{2011}), ISSN \bibinfo{issn}{00034916}.

\bibitem[{\citenamefont{Sachdev}(2001)}]{Sachdev01}
\bibinfo{author}{\bibfnamefont{S.}~\bibnamefont{Sachdev}},
  \emph{\bibinfo{title}{{Quantum phase transitions}}}
  (\bibinfo{publisher}{Cambridge University Press}, \bibinfo{year}{2001}).

\bibitem[{\citenamefont{Vidal}(2008)}]{Vidal08}
\bibinfo{author}{\bibfnamefont{G.}~\bibnamefont{Vidal}},
  \bibinfo{journal}{Phys. Rev. Lett.} \textbf{\bibinfo{volume}{101}},
  \bibinfo{pages}{110501} (\bibinfo{year}{2008}).

\bibitem[{\citenamefont{Baumgratz et~al.}(2013)\citenamefont{Baumgratz, Gross,
  Cramer, and Plenio}}]{BGC+13}
\bibinfo{author}{\bibfnamefont{T.}~\bibnamefont{Baumgratz}},
  \bibinfo{author}{\bibfnamefont{D.}~\bibnamefont{Gross}},
  \bibinfo{author}{\bibfnamefont{M.}~\bibnamefont{Cramer}}, \bibnamefont{and}
  \bibinfo{author}{\bibfnamefont{M.~B.} \bibnamefont{Plenio}},
  \bibinfo{journal}{Phys. Rev. Lett.} \textbf{\bibinfo{volume}{111}},
  \bibinfo{pages}{020401} (\bibinfo{year}{2013}).

\bibitem[{\citenamefont{Cramer et~al.}(2010)\citenamefont{Cramer, Plenio,
  Flammia, Somma, Gross, Bartlett, Landon-Cardinal, Poulin, and Liu}}]{CPF+10}
\bibinfo{author}{\bibfnamefont{M.}~\bibnamefont{Cramer}},
  \bibinfo{author}{\bibfnamefont{M.~B.} \bibnamefont{Plenio}},
  \bibinfo{author}{\bibfnamefont{S.~T.} \bibnamefont{Flammia}},
  \bibinfo{author}{\bibfnamefont{R.}~\bibnamefont{Somma}},
  \bibinfo{author}{\bibfnamefont{D.}~\bibnamefont{Gross}},
  \bibinfo{author}{\bibfnamefont{S.~D.} \bibnamefont{Bartlett}},
  \bibinfo{author}{\bibfnamefont{O.}~\bibnamefont{Landon-Cardinal}},
  \bibinfo{author}{\bibfnamefont{D.}~\bibnamefont{Poulin}}, \bibnamefont{and}
  \bibinfo{author}{\bibfnamefont{Y.-K.} \bibnamefont{Liu}},
  \bibinfo{journal}{Nature Communications} \textbf{\bibinfo{volume}{1}},
  \bibinfo{pages}{149} (\bibinfo{year}{2010}).

\bibitem[{\citenamefont{Landon-Cardinal and Poulin}(2012)}]{LP12}
\bibinfo{author}{\bibfnamefont{O.}~\bibnamefont{Landon-Cardinal}}
  \bibnamefont{and} \bibinfo{author}{\bibfnamefont{D.}~\bibnamefont{Poulin}},
  \bibinfo{journal}{New J. Phys.} \textbf{\bibinfo{volume}{14}},
  \bibinfo{pages}{085004} (\bibinfo{year}{2012}).

\bibitem[{\citenamefont{Buluta and Nori}(2009)}]{BN09}
\bibinfo{author}{\bibfnamefont{I.}~\bibnamefont{Buluta}} \bibnamefont{and}
  \bibinfo{author}{\bibfnamefont{F.}~\bibnamefont{Nori}},
  \bibinfo{journal}{Science} \textbf{\bibinfo{volume}{326}},
  \bibinfo{pages}{108} (\bibinfo{year}{2009}).

\bibitem[{\citenamefont{Cirac and Zoller}(2012)}]{CZ12}
\bibinfo{author}{\bibfnamefont{J.~I.} \bibnamefont{Cirac}} \bibnamefont{and}
  \bibinfo{author}{\bibfnamefont{P.}~\bibnamefont{Zoller}},
  \bibinfo{journal}{Nature Physics} \textbf{\bibinfo{volume}{8}},
  \bibinfo{pages}{264} (\bibinfo{year}{2012}).

\bibitem[{\citenamefont{Kim}(2014)}]{Kim14}
\bibinfo{author}{\bibfnamefont{I.~H.} \bibnamefont{Kim}},
  \bibinfo{journal}{arXiv preprint arXiv:1405.0137}  (\bibinfo{year}{2014}).

\bibitem[{\citenamefont{Vidal}(2007)}]{Vidal07}
\bibinfo{author}{\bibfnamefont{G.}~\bibnamefont{Vidal}},
  \bibinfo{journal}{Physical review letters} \textbf{\bibinfo{volume}{99}},
  \bibinfo{pages}{220405} (\bibinfo{year}{2007}).

\bibitem[{\citenamefont{Evenbly and Vidal}(2009)}]{EV09}
\bibinfo{author}{\bibfnamefont{G.}~\bibnamefont{Evenbly}} \bibnamefont{and}
  \bibinfo{author}{\bibfnamefont{G.}~\bibnamefont{Vidal}},
  \bibinfo{journal}{Phys. Rev. B} \textbf{\bibinfo{volume}{79}},
  \bibinfo{pages}{144108} (\bibinfo{year}{2009}), ISSN
  \bibinfo{issn}{1098-0121}, \eprint{0707.1454}.

\bibitem[{\citenamefont{Ilani and Zamir}(2004)}]{IZ04}
\bibinfo{author}{\bibfnamefont{I.}~\bibnamefont{Ilani}} \bibnamefont{and}
  \bibinfo{author}{\bibfnamefont{R.}~\bibnamefont{Zamir}}, in
  \emph{\bibinfo{booktitle}{Electrical and Electronics Engineers in Israel,
  2004. Proceedings. 2004 23rd IEEE Convention of}}
  (\bibinfo{organization}{IEEE}, \bibinfo{year}{2004}), pp.
  \bibinfo{pages}{102--105}.

\bibitem[{\citenamefont{Blume-Kohout}(2010)}]{Blume-Kohout10}
\bibinfo{author}{\bibfnamefont{R.}~\bibnamefont{Blume-Kohout}},
  \bibinfo{journal}{New J. Phys.} \textbf{\bibinfo{volume}{12}},
  \bibinfo{pages}{043034} (\bibinfo{year}{2010}).

\bibitem[{\citenamefont{Flammia and Liu}(2011)}]{FL11}
\bibinfo{author}{\bibfnamefont{S.~T.} \bibnamefont{Flammia}} \bibnamefont{and}
  \bibinfo{author}{\bibfnamefont{Y.-K.} \bibnamefont{Liu}},
  \bibinfo{journal}{Phys. Rev. Lett.} \textbf{\bibinfo{volume}{106}},
  \bibinfo{pages}{230501} (\bibinfo{year}{2011}).

\bibitem[{\citenamefont{da~Silva et~al.}(2011)\citenamefont{da~Silva,
  Landon-Cardinal, and Poulin}}]{SLP11}
\bibinfo{author}{\bibfnamefont{M.~P.} \bibnamefont{da~Silva}},
  \bibinfo{author}{\bibfnamefont{O.}~\bibnamefont{Landon-Cardinal}},
  \bibnamefont{and} \bibinfo{author}{\bibfnamefont{D.}~\bibnamefont{Poulin}},
  \bibinfo{journal}{Phys. Rev. Lett.} \textbf{\bibinfo{volume}{107}},
  \bibinfo{pages}{210404} (\bibinfo{year}{2011}).

\bibitem[{\citenamefont{Blume-Kohout}(2012)}]{Blume-Kohout12}
\bibinfo{author}{\bibfnamefont{R.}~\bibnamefont{Blume-Kohout}},
  \bibinfo{journal}{arXiv preprint arXiv:1202.5270} pp. \bibinfo{pages}{1--7}
  (\bibinfo{year}{2012}), \eprint{arXiv:1202.5270v1}.

\bibitem[{\citenamefont{Christandl and Renner}(2012)}]{CR12}
\bibinfo{author}{\bibfnamefont{M.}~\bibnamefont{Christandl}} \bibnamefont{and}
  \bibinfo{author}{\bibfnamefont{R.}~\bibnamefont{Renner}},
  \bibinfo{journal}{Phys. Rev. Lett.} \textbf{\bibinfo{volume}{109}},
  \bibinfo{pages}{120403} (\bibinfo{year}{2012}).

\end{thebibliography}

\appendix


\section{Numerical technique to identify disentanglers}
\label{sec:Numerical}

 The conjugate gradient approach used in~\cite{LP12} to find a disentangler and an isometry from a set of measurements data is a standard method for such optimization problem. However, the method is prone to identifying local minima rather than global minima and is hard to extend to large bond dimension $\chi$. Here, we improve the algorithm for optimization by borrowing an algorithm developed for efficient MERA energy minimization procedure in~\cite{EV09}. In this new numerical technique, the objective function remains the same but we interpret it as a tensor contraction
\begin{equation}
f(u,\rho_{i}) = \text{tr}(u\Gamma_{u}^R + u\Gamma_{u}^L )  \label{object2}
\end{equation}
where $\Gamma_{u}$ is the environment of disentangler $u$, shown on Fig.~\ref{fig:TNC}. Since an isometry $W$ keeps only the $\chi$ eigenvectors with largest eigenvalues, contraction of this small part of a MERA circuit gives the sum of first $\chi$ singular values of a reduced density matrix for two sites. 

\begin{figure}[h]
	\includegraphics[width=0.9\columnwidth]{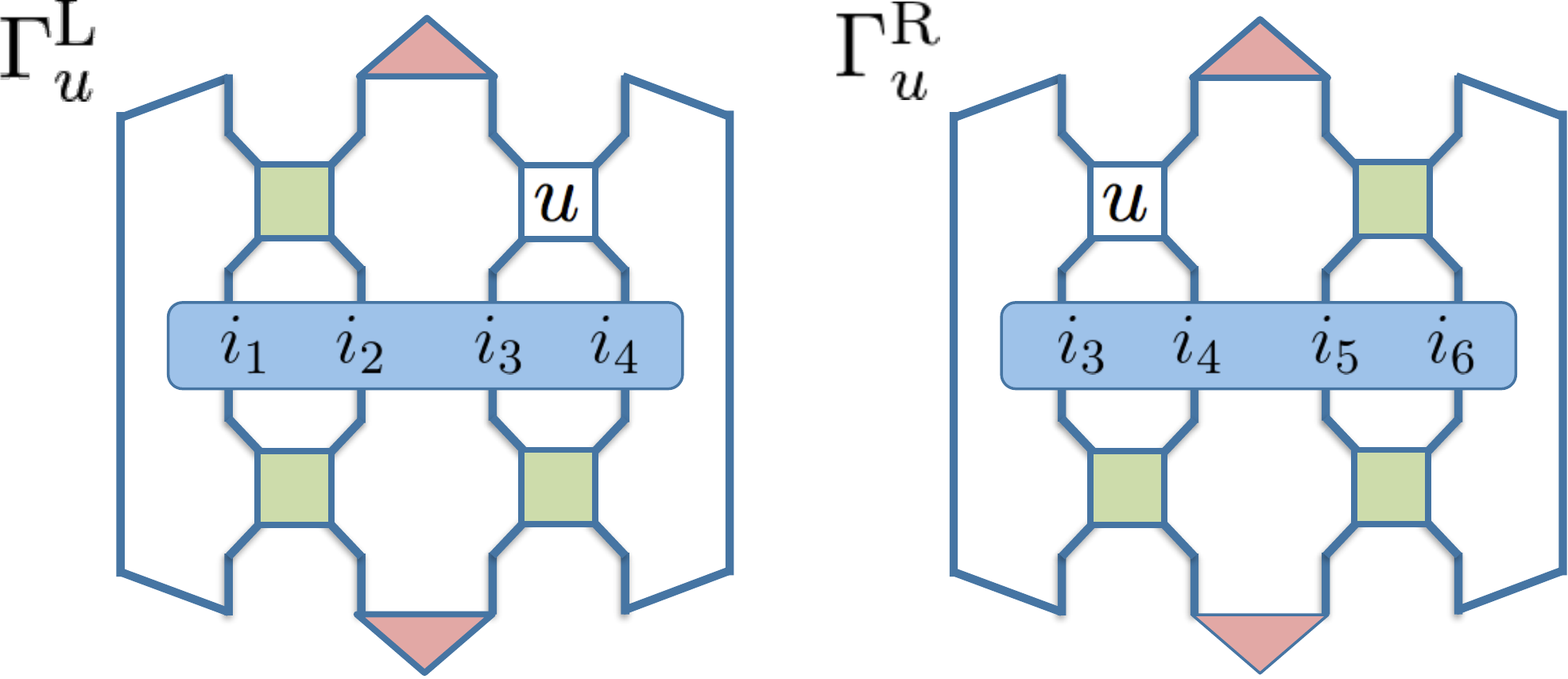}
	\caption{\label{fig:TNC} (Color online) Tensor network contraction corresponding to the objective function $f$ on a ternary MERA. The blue box surrounding the five circles represents the reduced density matrix $\rho_i$ on five sites. The tensor network outside the quantum gate $u$ is the environment $\Gamma_u$ (which includes $u^\dagger$).}
\end{figure}

The optimization of $\text{tr}(u \Gamma_u)$ is analytically hard as $\Gamma_u$ also depends on disentangler through $u^\dagger$. It is a quadratic optimization problem but we will consider its linearized version. We linearize the objective function by fixing $u^\dagger$ and thus $\Gamma_u$ and only varying the disentangler $u$. For a fixed $\Gamma_u$, the optimal $u$ can be found by standard singular value decomposition (SVD) technique. 
One finds the SVD decomposition of the environment, $\Gamma_{u} = NSM^{\dagger}$. The trace in Eq.~\ref{object2} is extremized by the choice of $u = MN^{\dagger}$. Let $u_0$ be an initial guess (random) unitary transformation. Then, starting with $k=1$ and for increasing values $k=2,3,...,$ we obtain $u_k$ from $u_{k-1}$ by optimizing $\text{tr}(u_k \Gamma_{u_{k-1}})$.

Optimizing a disentangler also depends on the neighboring disentanglers. Thus, the optimization not only iterates the above process to optimize the disentanglers, but also sweeps over all the disentanglers of a given layer. An optimization algorithm can choose to balance the number of iterations and the number of sweeps in different ways. Numerically, we find that a single iteration and multiple sweeps gave satisfactory results. Although the objective function is not guaranteed to improve at each step, nor to converge, we find that this method typically converges faster than the previous method based on a conjugate gradient technique.

\begin{figure}
	\includegraphics[scale=0.55, angle=270]{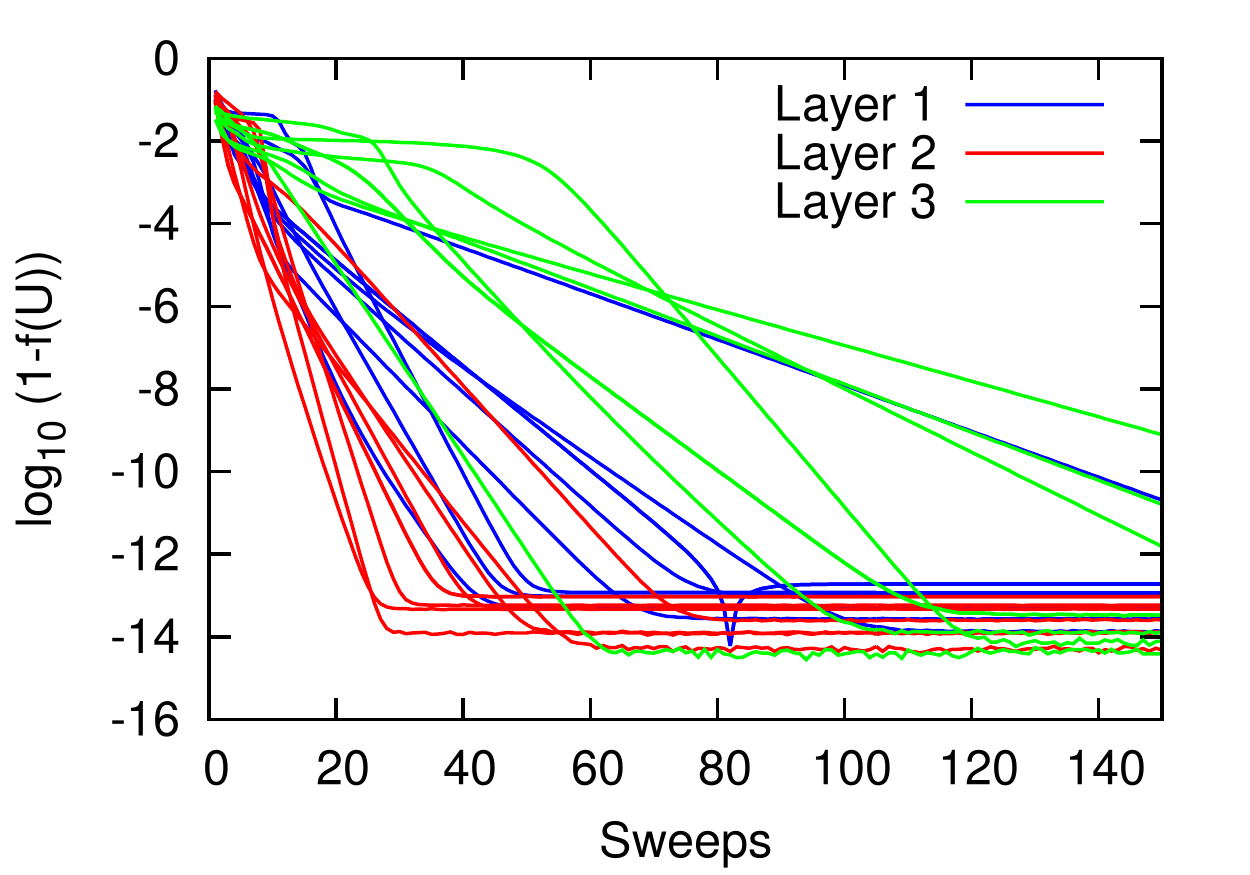} 	
	\includegraphics[scale=0.55, angle = 270]{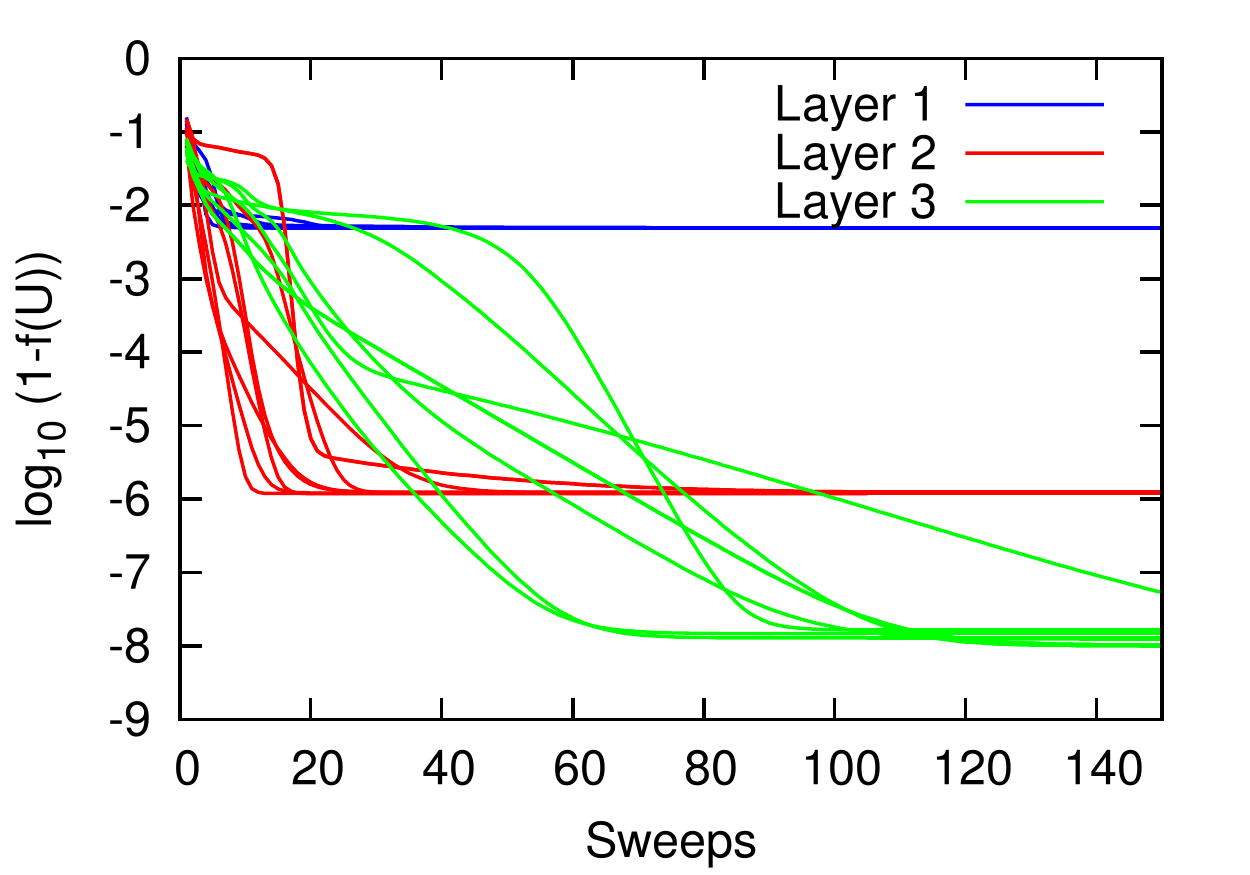} 
	\caption{\label{fig:numeric} (Color online) Convergence of the average value of the objective function $f(u)$ for different layers using MERA tomography on 24 qubits and a binary geometry, as a function of the number of sweeps. (up) Results for 10 random MERA states. (down) Results for 10 superpositions of a MERA state and a Haar-random state of magnitude $\delta=0.1$ (see Eq.~\ref{eq:erroneous_state}). 
}
\end{figure}

To check the validity of our algorithm, we generated random MERA states by choosing the quantum gates of the MERA circuit at random according to the Haar measure. We then performed our quantum tomography algorithm to find MERA circuits for the states. The algorithm was tested on the binary MERA geometry with 24 qubits. In order to check the numerical optimization algorithm, we assumed brute-force tomography to be perfect, implying that the 4-sites reduced density matrices $\rho_i$ are accurate. At the end, the fidelity 
\begin{equation}
F(\rho_0,\rho_r)\equiv \left( \text{tr}\sqrt{\sqrt{\rho_0} \rho^{\mathrm{tomo}} \sqrt{\rho_0}} \right)^2 
\end{equation}
 between the original state $\rho_0$ and the reconstructed state $\rho^{\mathrm{tomo}}$ was computed. The infidelity $1-F(\rho_0,\rho^{\mathrm{tomo}})$ obtained on average was $10^{-13}$. There were few cases which took more than hundred seconds to reach that fidelity, but it was only one out of twenty. In Fig.~\ref{fig:numeric}, we can see that 20 runs all gave convergence in 100 iterations, which take only ten seconds. We conclude that, for states with an exact MERA representation, our method requires reasonable processing time and yields a very accurate reconstruction result. 

Also, we examined how our algorithm performed if the experimental state does not admit an exact MERA representation. 
For this, we considered the experimental state $\ket \psi $ to the be the superposition of a MERA state $\ket{\psi_{MERA}}$ and a Haar-random state $\ket{\psi_e}$
\begin{equation} \label{eq:erroneous_state}
 \ket{\psi} = \sqrt{1-\delta^2} \ket{\psi_\mathrm{MERA}} +\delta \ket{\psi_e}
\end{equation}
Results for that case are given in the bottom plot of Fig.~\ref{fig:numeric}. In that case, the optimization performs poorly on the first layer, converging around $10^{-2}$ and improves for the second layer (around $10^{-6}$) and the third (around $10^{-8}$). The infidelity between the input state and the reconstructed state was around $10^{-2} \approx \delta^2$, which is consistent with the intuition that the reconstructed state is the MERA part of the state. Our interpretation is that the isometries of the layers progressively filter out the non-MERA part of the state.   


\section{Error analysis} \label{sec:ErrorAnalysis}

We want to assess how the errors accumulate throughout the tomography procedure. 
One source of errors is the imperfect estimation of expectation values of \emph{physical} observables due to the finite number of repeated measurements, fluctuations in the state preparation and measurement errors. This source of errors is common to all tomography schemes and putting meaningful error bars on brute-force tomography is a complex issue which is an active area of research~\cite{Blume-Kohout12,CR12}. We will not address this issue here, assuming that expectation values of physical observables are perfect. However, introducing those errors would be straightforward in our analysis. 

Instead, we will focus on the errors which are introduced by the idea of MERA tomography itself. More precisely, the truncation errors at the level of every isometry will introduce i) intrinsic errors since part of the information on the state is thrown away and ii) reconstruction errors since the relationship between renormalized operators and physical observables is not exact.

We will first focus on intrinsic errors by assuming that any reduced density matrix in the circuit can be obtained exactly in Sec.\ref{intrinsic-errors}. In a second step, we will estimate the error introduced by using renormalized operators to extract information about density matrices in higher level of the MERA circuit in Sec.\ref{error-renormalization}.

\subsection{Intrinsic errors} \label{intrinsic-errors}

We first analyze the error propagation at the level of a single isometry in Sec.\ref{sec:error-isometry} in order to infer the propagation of errors for a single layer in Sec.\ref{sec:error-layer} and finally analyze the error propagation for the global MERA circuit in Sec.\ref{sec:error-global}.

\subsubsection{Error for a single isometry}\label{sec:error-isometry}

The experimental state not being a MERA state or imperfection in the numerical optimization of the objective function given by Eq.~\ref{object1} results in the probability weight of the density matrix not being fully in a $\chi$-dimensional subspace (remember that we are discussing the case $\chi=2$ in all numerical examples). 
   
Let $\mathbb{V}$ be the Hilbert space for a particle of quantum dimension $\chi$, i.e., $\mathbb{V}=\mathbb{C}^\chi$. The isometry $w$ maps $\mathbb{V}^{\otimes k}$ to $\mathbb{V}^{\otimes 1}$ where $k=2$ for a binary MERA. It can be conveniently expressed as the product of a unitary transformation $v$ followed by a projector $P$ which maps $\ket{0}^{\otimes {k}}$ to $\mathbb{V}$. In the ideal case, the role of $v$ is to rotate the basis so that a $k$-site reduced density matrix $\rho$ is diagonalized and only has support on the space $\ket{0}^{\otimes {k-1}}\otimes \mathbb{V}$. In practice, $v$ rotates the first $\mu=1\dots\chi$ eigenvectors to $\ket{0}^{\otimes {k-1}}\ket{\mu}$ as represented on Fig.~\ref{fig:density}.

\begin{figure}
	\includegraphics[width=0.9\columnwidth]{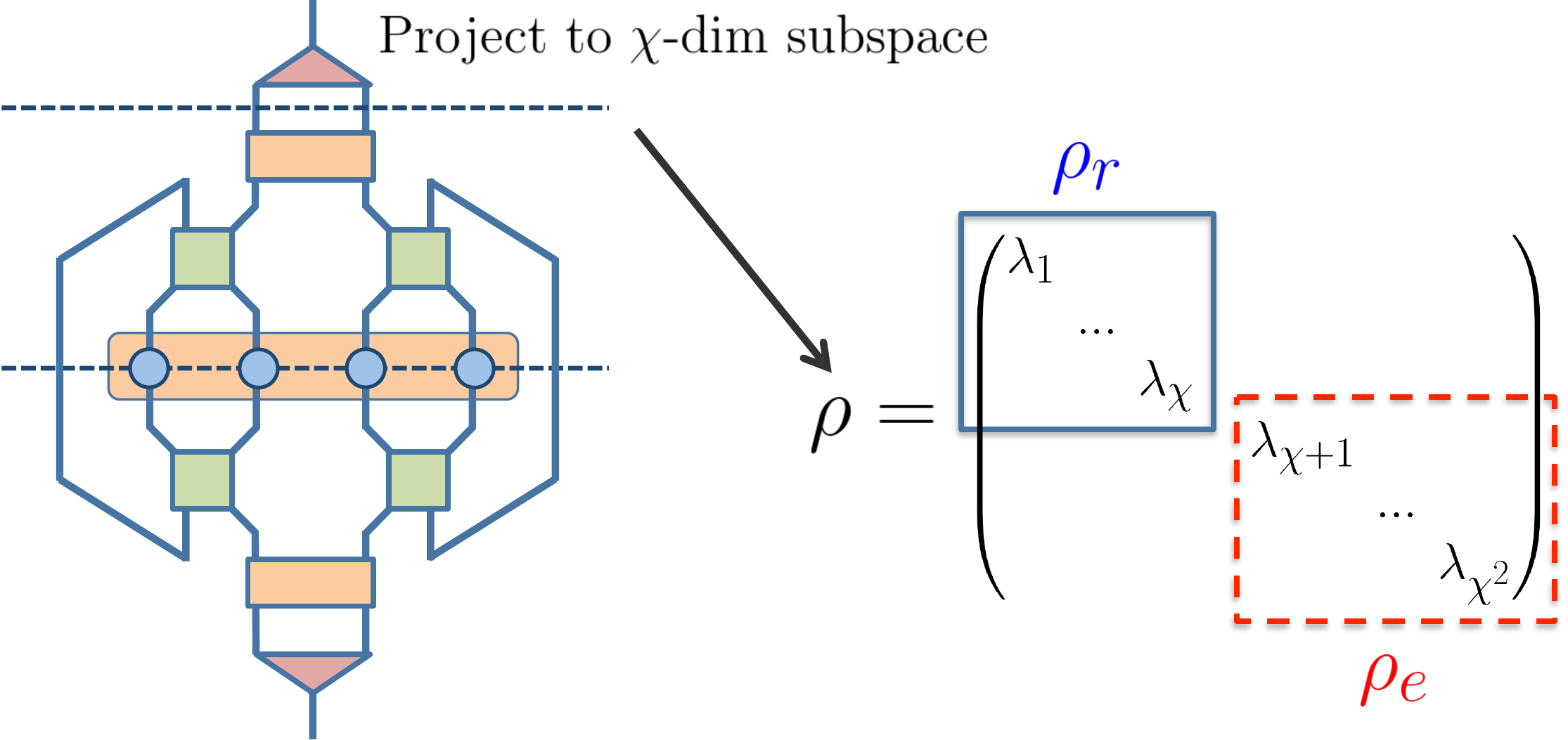}
	\caption{\label{fig:density} (Color online) Tensor contraction for virtual two-sites reduced density matrix $\rho$. $\rho$ represents transformed the density matrix after it passes through the unitary parts of isometries which diagonalizes it. The blue box represents the part of the density matrix which is kept and passed to the next layer of MERA while the red box represents the density matrix that is truncated out.}
\end{figure}

Let $\rho$ be the (virtual) $k$-sites reduced density matrix diagonalized through the isometry, $\rho_r$ be the reduced density matrix supported on $\ket{0}^{\otimes {k-1}}\otimes \mathbb{V}$ and $\rho_e=\rho-\rho_r$ be the density matrix that would be thrown out by the isometry. Under this setting, let $1-f(u) = \sum_{i>\chi} \lambda_i =  \epsilon$. 

The MERA reconstruction procedure to do not lose any information if all entries of $\rho_e$ is zero, i.e., $\epsilon=0$. The renormalization relation for observables, Eq.\ref{eq:scaleO}, was derived under this assumption. However, if the renormalized state has a non-zero $\rho_e$, i.e., 
\begin{equation}
 \rho_1 = u_1 \rho_0 u^\dagger_1 = \rho_r + \rho_e 
\end{equation}
the renormalized observable also as support on the virtual space that is truncated by the MERA circuit
\begin{equation}
O_1 = u_1 O_0 u^\dagger_1 = {\cal A}(O_0) + E
\end{equation}

For convenience, let's consider the normalized states $\rho^\mathrm{trunc}\equiv \rho_r/(1-\epsilon)$ and $\hat{\rho}_e\equiv \rho_e / \epsilon$. Then, in the imperfect MERA setting, Eq.\ref{eq:scaleO} is modified and reads
\begin{equation}\label{eprop}
\text{tr}(\rho_0 O_0) =  (1-\epsilon) \text{tr}\left( \rho^\mathrm{trunc} {\cal A}(O_0) \right) + \epsilon \text{tr}(\hat{\rho}_e E)
\end{equation}

Because we truncate out $\rho_e$ and measure $\rho^\mathrm{trunc}$ in next layer for actual MERA tomography, there will be a discrepancy $|\epsilon \text{tr}(\hat{\rho}_e E)|$ between the measured quantity $\text{tr}(\rho_0 O_0)$ and the quantity of interest $\text{tr}\left( \rho^\mathrm{trunc} {\cal A}(O_0) \right)$. 


In the rest of the discussion, we consider $\epsilon \text{tr}(\rho_e E)$ as a random error about which only $\epsilon$ is known. 

\subsubsection{Error analysis for a layer}\label{sec:error-layer}

Having understood how an error $\epsilon$ affects a single isometry, we will now assess how the errors for each isometry combine inside a layer. 
Let's consider the global state $\rho$ after it goes through the isometries in Fig.~\ref{fig:density2}. As we discussed above, if the MERA circuit is not perfect, the global state $\rho$ would decompose as
\begin{equation}
\rho = \rho_r + \rho_e 
\end{equation}
Since only $\rho_r$ is passed through the MERA circuit and $\rho_e$ is thrown away, $\rho_e$ is informationally inaccessible in the later step of tomography. However, we want to quantify $\rho_e$ in order to bound the distance between $\rho$ and $\rho_r$.


To achieve our purpose, we will divide $\rho_e$ into smaller parts. We will label the isometry with an index $i=1\dots \ell$. Let $\mathbb{C}_i = \ket{0}^{\otimes {k-1}} \otimes \mathbb{V}$ be the correct $\chi$-dimensional subspace kept by the MERA circuit, and $\mathbb{E}_i = \mathbb{V}^{\otimes k} \setminus \mathbb{C}_i$ be the incorrect subspace, indicated by red lines on Fig.~\ref{fig:density2}. Now let $\rho^i$ be the reduced density matrix before the $i$-th isometry. Then, the density matrix at the input of the $i$-th isometry is

\begin{equation}\label{eq:local-density-matrix}
\rho^i =  \rho^i_r + \rho^i_e = \text{tr}_{\bar{i}}(\rho)  
\end{equation}
where $\text{tr}_{\bar{i}}$ represent the partial trace on all sites except the input of the $i$-th isometry and $\rho^i_r\in\mathbb{C}_i$ and $\rho^i_e\in\mathbb{E}_i$ (cf. Fig.~\ref{fig:density}). 

While the globally correct state $\rho_r$ is locally in the correct subspace $\mathbb{C}_i$ for all $i$, the globally incorrect state $\rho_e$ contains part which are locally correct for some $i$ and locally incorrect for some non-empty set $I\subset \left[ 1;\ell\right]$. We will denote $\rho_{e I }$ the part of $\rho_e$ having support on the subspace $\left( \otimes_{i \in I} \mathbb{E}_i \right) \left( \otimes_{i \notin I} \mathbb{C}_i \right)$. Thus,  
\begin{equation}
\rho_e = \sum_{I} \rho_{eI} 
\end{equation}
Note however, that $\rho_{eI}$ is locally correct for $i\notin I$. The terms of the density matrix at the input of the $i$-th isometry of Eq.~\eqref{eq:local-density-matrix} decomposes into  

\begin{eqnarray}
\rho^i_r =  \text{tr}_{\bar{i}}(\rho_r + \sum_{i \notin I} \rho_{eI}) \\
\rho^i_e =  \text{tr}_{\bar{i}}( \sum_{i \in I} \rho_{eI})
\end{eqnarray}

The density matrices $\rho^i$ are estimated by physical measurements and the numerical optimization gives the trace of the locally incorrect state $\text{tr}(\rho^i_e) = 1-f(U) = \epsilon_i$. Knowing this error for each isometry on the layer allow us to estimate the weight of the globally incorrect state

\begin{eqnarray}
\epsilon_e \equiv \text{tr}(\rho_e) & = & \sum_I \text{tr}(\rho_{eI} ) \\
& \leq & \sum_{i=1}^\ell \sum_{i\in I} \text{tr}(\rho_{eI}) \\
& = & \sum_{i=1}^\ell \text{tr}(\sum_{i\in I} \rho_{eI})=\sum_i \epsilon_i \label{eq:error-estimate}
\end{eqnarray}

The normalized truncated state $\rho^\mathrm{trunc} = \rho_r / (1-\epsilon_e)$ is the one we are interested in learning in the next step of variational tomography. The distance between the state before truncation $\rho$ and the normalized truncated state $\rho^\mathrm{trunc}$ can be bounded in fidelity and in trace distance.

For the fidelity, one can observe that 
\begin{equation}
 F(\rho,\rho^\mathrm{trunc}) = \| \sqrt{\rho^\mathrm{trunc}} \sqrt{\rho} \|_1 = \| \sqrt{\rho^\mathrm{trunc}} \sqrt{\rho_r} \|_1=\sqrt{1-\epsilon_e}
\end{equation}
and for the trace distance
\begin{equation}
 \| \rho - \rho^\mathrm{trunc} \|_1 \leq \frac{\epsilon_e}{1-\epsilon_e} \| \rho_r \|_1 + \| \rho_e \|_1 \leq 2 \epsilon_e
\end{equation}

Using those relations and Eq.~\ref{eq:error-estimate}, we get
\begin{eqnarray}
1 - F(\rho, \rho^\mathrm{trunc}) & \leq & \frac{1}{2}\sum_i \epsilon_i \\ \label{inequality}
\frac{1}{2} \| \rho - \rho^\mathrm{trunc} \|_1 & \leq & \sum_i \epsilon_i \label{inequality2}
\end{eqnarray}

\subsubsection{Error analysis of the global circuit}\label{sec:error-global}

The final goal is to estimate the distance between the physical state $\rho_0$ and the reconstructed MERA state 
\begin{equation}
\rho_{tomo}=U_{0\to m}^\dagger \rho_{m}^\mathrm{trunc} U_{0 \to m}
\end{equation}
where $U_{0 \to m}=\prod_{j=0}^{m-1} U_{j \to j+1}$ is the global MERA circuit and $\rho_{m}^\mathrm{trunc}$ is the output after the final $m$-th layer. The idea is to relate this distance to the truncation error $d(\rho^j,\rho^{j}_{rec})$ introduced after each layer $j$ of the MERA circuit when the state $\rho_j$ is truncated to $\rho^\mathrm{trunc}_{j}$. We will use the following lemma.

\begin{lemma} \label{KeyLemma}
For any distance $d(\sigma,\rho)$ which obeys the property $d(U\sigma U\dagger,\rho)=d(\sigma,U^\dagger \rho U^\dagger)$ and the triangle inequality, the following inequality holds 
\begin{equation}
 d(\rho_0,\rho^\mathrm{tomo}) \leq \sum_{\tau=1}^m d(\rho_\tau,\rho^\mathrm{trunc}_j) 
\end{equation} 
\end{lemma}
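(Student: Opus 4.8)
The plan is to prove the statement by induction on the number of layers $m$, peeling off one layer at a time from the \emph{top} of the circuit. Write $U_{0 \to m} = U_{m-1 \to m}\, U_{0 \to m-1}$ and $\rho^\mathrm{tomo} = U_{0\to m}^\dagger \rho_m^\mathrm{trunc} U_{0\to m}$. The key observation is that $\rho_m$ (the state just before the final truncation) is exactly $U_{m-1\to m}$ applied to $\rho_{m-1}^\mathrm{trunc}$, because $\rho_{m-1}^\mathrm{trunc}$ is by construction the state that is fed into layer $m$. So I would first insert $\rho_m$ as an intermediate point. Define the auxiliary state $\tilde{\rho} := U_{0\to m}^\dagger\, \rho_m\, U_{0\to m}$, i.e. the reconstructed state if the last truncation were omitted. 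By the triangle inequality,
\begin{equation}
 d(\rho_0,\rho^\mathrm{tomo}) \leq d(\rho_0,\tilde{\rho}) + d(\tilde{\rho},\rho^\mathrm{tomo}).
\end{equation}

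For the second term, apply the unitary-invariance hypothesis $d(U\sigma U^\dagger,\rho)=d(\sigma,U^\dagger\rho U^\dagger)$ repeatedly to strip off all of $U_{0\to m}$: since both $\tilde\rho$ and $\rho^\mathrm{tomo}$ are of the form $U_{0\to m}^\dagger(\cdot)U_{0\to m}$, we get $d(\tilde{\rho},\rho^\mathrm{tomo}) = d(\rho_m,\rho_m^\mathrm{trunc})$, which is exactly the $\tau=m$ term of the claimed sum. (One should check the invariance property is being used in the right direction — the stated form lets one conjugate a common unitary away from both arguments; I would just note that $d$ is therefore invariant under $\sigma,\rho \mapsto U^\dagger\sigma U, U^\dagger\rho U$.) For the first term, $\tilde\rho = U_{0\to m-1}^\dagger\,U_{m-1\to m}^\dagger \rho_m U_{m-1\to m}\, U_{0\to m-1} = U_{0\to m-1}^\dagger\, \rho_{m-1}^\mathrm{trunc}\, U_{0\to m-1}$, using $U_{m-1\to m}^\dagger \rho_m U_{m-1\to m} = \rho_{m-1}^\mathrm{trunc}$. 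But this is precisely the reconstructed MERA state for the $(m-1)$-layer circuit built from the truncated state $\rho_{m-1}^\mathrm{trunc}$; so the induction hypothesis gives $d(\rho_0,\tilde\rho)\le \sum_{\tau=1}^{m-1} d(\rho_\tau,\rho_\tau^\mathrm{trunc})$. Adding the two bounds completes the induction; the base case $m=0$ is $d(\rho_0,\rho_0)=0$.

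The main subtlety — and the step I would be most careful about — is the bookkeeping of what "$\rho_\tau$" means in the inductive step. In the $m$-layer problem $\rho_\tau$ is obtained from $\rho_0$ by applying layers $1,\dots,\tau$ (with truncation after each of $1,\dots,\tau-1$ but not $\tau$), whereas in the reduced $(m-1)$-layer subproblem the "physical state" is $\rho_{m-1}^\mathrm{trunc}$ and... no: the cleaner framing is to peel from the \emph{bottom} instead, or better, to not recurse on a different physical state at all. I would instead present it non-inductively: define $\rho^{(k)} := U_{0\to k}^\dagger\,\rho_k^\mathrm{trunc}\,U_{0\to k}$ for $k=0,\dots,m$, so $\rho^{(0)}=\rho_0$ and $\rho^{(m)}=\rho^\mathrm{tomo}$. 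Then telescope:
\begin{equation}
 d(\rho_0,\rho^\mathrm{tomo}) \leq \sum_{k=1}^m d\bigl(\rho^{(k-1)},\rho^{(k)}\bigr),
\end{equation}
and show each term equals $d(\rho_k,\rho_k^\mathrm{trunc})$. Indeed $\rho^{(k)} = U_{0\to k}^\dagger \rho_k^\mathrm{trunc} U_{0\to k}$ while $\rho^{(k-1)} = U_{0\to k-1}^\dagger \rho_{k-1}^\mathrm{trunc} U_{0\to k-1} = U_{0\to k}^\dagger\bigl(U_{k-1\to k}\rho_{k-1}^\mathrm{trunc} U_{k-1\to k}^\dagger\bigr)U_{0\to k} = U_{0\to k}^\dagger \rho_k U_{0\to k}$, using that layer $k$ acting on the truncated state at level $k-1$ produces $\rho_k$. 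By unitary invariance of $d$, $d(\rho^{(k-1)},\rho^{(k)}) = d(\rho_k,\rho_k^\mathrm{trunc})$, and summing gives the lemma. This telescoping form sidesteps the awkward "recurse on a modified physical state" issue entirely, so that is how I would actually write it up.
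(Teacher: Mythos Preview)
Your proposal is correct and follows essentially the same approach as the paper: the paper proves the lemma by recursion on $m$, using the triangle inequality to split off the term $d(U_{0\to m-1}^\dagger \rho_{m-1}^\mathrm{trunc} U_{0\to m-1},\, U_{0\to m}^\dagger \rho_{m}^\mathrm{trunc} U_{0\to m})$ and then the unitary-invariance property together with the identity $U_{m-1\to m}\rho_{m-1}^\mathrm{trunc}U_{m-1\to m}^\dagger=\rho_m$ to identify it with $d(\rho_m,\rho_m^\mathrm{trunc})$. Your telescoping formulation with the intermediate states $\rho^{(k)}=U_{0\to k}^\dagger \rho_k^\mathrm{trunc} U_{0\to k}$ is simply the unrolled version of that same recursion, and your worry about the ``recurse on a modified physical state'' bookkeeping was unfounded---the paper's recursion is exactly what you wrote first.
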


\begin{proof}
 For m=1, we have 
 \begin{eqnarray}
 d(\rho_0,U_{0 \to 1}^\dagger \rho_{1}^\mathrm{rec} U_{0 \to 1}) & = & d(U_{0 \to 1} \rho_0 U_{0 \to 1}^\dagger, \rho^{1}_{rec})                   \\
 & = & d( \rho^1, \rho_{1}^\mathrm{rec})
 \end{eqnarray}
 
 For arbitrary $m>1$, we have 
\begin{eqnarray}
 & & d(\rho_0,U_{0 \to m}^\dagger \rho_{m}^\mathrm{trunc} U_{0 \to m})  \leq  
  d(\rho_0,U_{0 \to m-1}^\dagger \rho_{m-1}^\mathrm{trunc} U_{0 \to m-1}) \nonumber \\
  & & + d(U_{0 \to m-1}^\dagger \rho_{m-1}^\mathrm{trunc} U_{0 \to m-1},U_{0 \to m}^\dagger \rho_{m}^\mathrm{trunc} U_{0 \to m})    
 \end{eqnarray}
 The last term can be rewritten as
 \begin{equation}
 d(U_{m-1 \to m} \rho_{m-1}^\mathrm{trunc} U_{m-1 \to m}^\dagger,\rho_{m}^\mathrm{trunc})
 =d( \rho_{m},\rho_{m}^\mathrm{trunc})
 \end{equation}
 Recursively applying this inequality proves the lemma.
\end{proof}

We will now apply this lemma to the distance corresponding to the fidelity and the trace distance.

The fidelity can be used to define the distance
\begin{equation}
\theta(\rho,\sigma) =\arccos F(\rho,\sigma)
\end{equation} 
For every layer, we assume that the error is small enough for the approximation $\cos(\theta)\approx 1-\theta^2/2$ to hold. Combined with Eq.~\ref{inequality}, we get
\begin{equation}
 \forall j \quad \theta(\rho_{j},\rho_{j}^\mathrm{trunc}) \approx \left(\sum_i \epsilon^j_i\right)^{1/2}
\end{equation}
where $\epsilon_i^j$ is the error for the $i$-th isometry in layer $j$.

Applying the lemma, we get

\begin{equation}
\theta(\rho,\rho^\mathrm{tomo}) \leq \sum^{m}_{j=1} \theta(\rho_{j},\rho_{j}^\mathrm{trunc})  \approx \sum_{j=1}^m \left( \sum_i \epsilon^j_i \right)^{1/2} 
\end{equation}

To relate the distance to the fidelity, we use the inequality $\cos(\theta)\geq 1-\theta^2/2$ which implies 

\begin{eqnarray}
1 - F(\rho, \rho^\mathrm{tomo}) & \leq & \frac{1}{2} \left[ \sum_{j=1}^m \left( \sum_i \epsilon^j_i \right)^{1/2} \right]^2   \label{ubound1} \\
& = & \frac{1}{2} \sum_{ij} \epsilon^j_i + \sum_{j<j'} \sqrt{ \sum_{i,i'} \epsilon^j_i \epsilon^{j'}_{i'}} \label{interpretation}
\end{eqnarray} 
where the first term in Eq.~\eqref{interpretation} is the incoherent sum of all truncation errors whereas the second term of Eq.~\eqref{interpretation} are due to coherent interference of errors in different layers and isometries.

We can also directly apply the lemma to the trace distance, applying Eq.~\ref{inequality2} to relate the terms to the truncation errors to obtain:
\begin{equation}
D(\rho,\rho^\mathrm{tomo}) \leq \sum_{j=1}^m \sum_{i} \epsilon^j_i \label{ubound2}
\end{equation}

Fig.~\ref{fig:bound} compares the upper bound obtained by Eq.\eqref{ubound1} to the (in)fidelity between simulated experimental states which are the superposition of a random MERA state with a Haar-random state of magnitude $\epsilon$. 
The results on the figure show that the theoretical upper bound for $1-F(\rho,\rho^\mathrm{trunc})$ is a useful proxy.

\begin{figure}
	\includegraphics[scale=0.64, angle=270]{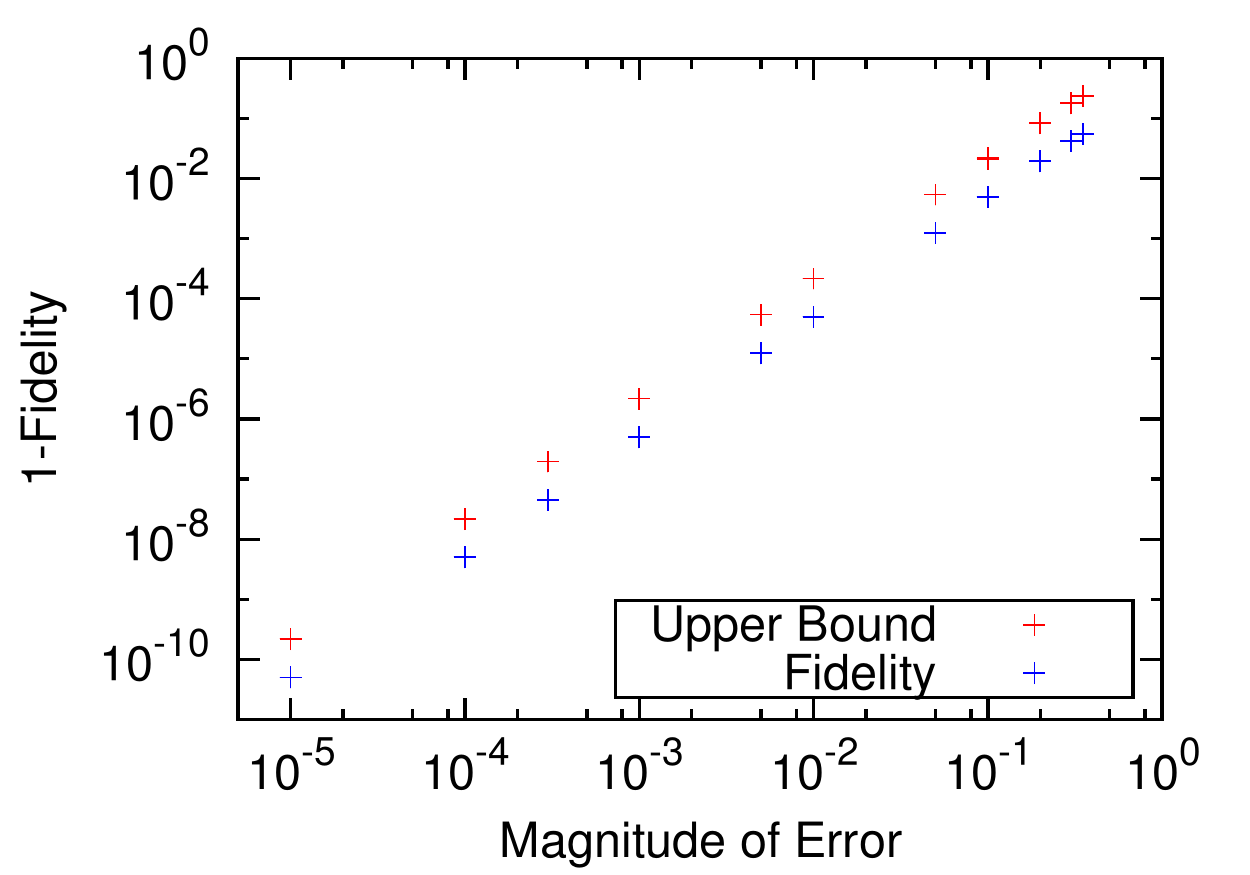} 
	\caption{\label{fig:bound} (Color online) Fidelity between the experimental state and the reconstructed state as a function of the amplitude $\epsilon$ of the Haar-random state added to a 24-qubit random MERA state (see Eq.~\ref{eq:erroneous_state}.
	The upper bound was calculated using values of $\epsilon_i^j$
	obtained through the tomographic procedure. Simulation was performed ten times for each amplitude of error. }
\end{figure}

Note that the magnitude of the Haar-random state $\epsilon$ sets the value of the (in)fidelity between the experimental state and the tomographically reconstructed state since the MERA tomography seems to reconstruct the MERA part of the experimental state, leading to $1-F\approx \epsilon^2$, which appears clearly on Fig.~\ref{fig:bound}. 

%

The upper bounds of Eqs.~\eqref{ubound1} and \eqref{ubound2} can be estimated directly from tomographic data obtained during the reconstruction. Thus, they are a certificate on the distance between the experimental state and the one reconstructed by MERA tomography. 


The error analysis until now assumed that we had access to perfect tomographic estimate of the reduced density matrices on small blocks of particles. However, when we use the structure of the MERA circuit to relate physical measurements to renormalized observables, the truncation error will inevitably introduce errors on the tomographic estimates. We now discuss those type of errors and see how they modify our error bounds.

\subsection{Error introduced by renormalizing physical measurements}\label{error-renormalization}


As described in Sec~\ref{sec:error-isometry}, the truncation errors will not only introduce an intrinsic error, but also lead to an erroneous reconstruction of the reduced density matrix in renormalized layer. Indeed, Eq.~\ref{eprop} which relates the expectation value of the physical observable $\text{tr}\left[\rho_0 O^j_0\right]$ to the expectation value of $\mathcal{A}^\tau (O^j_0)$, the renormalized observable at level $\tau$, on the state we want to reconstruct $\rho_\tau^\mathrm{trunc}$ contains a random error term 
\begin{equation}
\Delta^\tau_j=\epsilon^\tau_e \text{tr}\left[\hat{\rho}_e^\tau E^j\right] 
\end{equation}
where $\epsilon^\tau_e$ is bounded thanks to Eq.~\eqref{eq:error-estimate}.
This erroneous reconstruction will introduce an additional term in the error bound~\eqref{ubound2} which we now analyze.


The reduced density matrix at a renormalized level $\rho^\tau$ will be reconstructed using the the orthonormal operators $R_i$, see~Eq.\ref{ortho}, which span the entire Hilbert space for density operator. Due to the erroneous terms $\Delta^j_\tau$, we have 

\begin{eqnarray} \label{final}
\rho_\tau & = & \sum_i \text{tr}[\rho^\tau R_i] R_i = \sum_{i,j} \beta_{ij} \text{tr}[\rho_\tau O^j_\tau] R_i  \\
 & = & \sum_{i,j} \beta_{ij} \left( \text{tr}[\rho_{\tau-1} O^j_{\tau-1}] - \Delta^j_\tau \right) R_i \\
 & = & \sum_{i,j} \beta_{ij} \left( \text{tr}[\rho_0 O^j_0]- \sum_{\ell=1}^\tau \Delta^j_\ell \right) R_i \\
 & = & \sum_{i,j} \beta_{ij} \text{tr}[\rho_0 O^j_0] R_i - \sum_{i,j} \sum_{\ell=1}^\tau \Delta^j_\ell \beta_{ij} R_i
\end{eqnarray}

Thus, the density matrix reconstructed by our method $\rho_\tau^\mathrm{trunc}$ will be
\begin{equation}
 \rho_\tau^{rec} =\sum_{i,j} \beta_{ij} \text{tr}[\rho_0 O^j_0] R_i=\rho_\tau + \sum_{i,j} \sum_{\ell=1}^\tau \Delta^j_\ell \beta_{ij} R_i
\end{equation}
where the last term quantifies the error due to the erroneous reconstruction, leading to the inequality

\begin{eqnarray}
D(\rho_\tau,\rho^\mathrm{trunc}_\tau) & = & \frac{1}{2} \| \sum_{i,j} \sum_{\ell=1}^\tau \Delta^j_\ell \beta_{ij} R_i \|_1 \\
& \leq & \frac{1}{2} \sum_{\ell=1}^\tau \sum_k \epsilon_{\ell,k} \| \sum_{i,j} \beta_{ij}   R_i \|_1 
\end{eqnarray}

Note that this error term depends not only on the truncation errors at level $\tau$, but also depends on all the truncation errors in previous levels. Thus, this term will in general scale \emph{quadratically} with the size of the system. However, it appears to be well-behaved numerically.

\end{document}